\newtheorem{assumption}{\bf Assumption}
\newtheorem{corollary}{\bf Corollary}
\newtheorem{theorem}{\bf Theorem}
\newtheorem{lemma}{\bf Lemma}
\newtheorem{remark}{\bf Remark}
\newtheorem{proof}{Proof}
\renewcommand*{\@opargbegintheorem}[3]{\trivlist
      \item[\hskip \labelsep{\bfseries #1\ #2}] \textbf{(#3):}\ }
\begin{document}

\title
{OptiVote: Non-Coherent FSO Over-the-Air Majority Vote for Communication-Efficient Distributed Federated Learning in Space Data Centers}
\author{Anbang Zhang, Chenyuan Feng,~\IEEEmembership{Member, IEEE}, Wai Ho Mow,~\IEEEmembership{Senior Member, IEEE}, Jia Ye,~\IEEEmembership{Member, IEEE},\\
Shuaishuai Guo,~\IEEEmembership{Senior Member, IEEE}, Geyong Min,~\IEEEmembership{Member, IEEE}, and Tony Q. S. Quek,~\IEEEmembership{Fellow, IEEE} 
\thanks{This work was supported  by the Shandong Provincial Natural Science Foundation under Grant ZR2025MS983.
(\emph{*Corresponding author: Chenyuan Feng})

Anbang Zhang, Shuaishuai Guo are with School of Control Science and Engineering, Shandong University, Jinan 250061, China (e-mail: zab\_0613@163.com, shuaishuai\_guo@sdu.edu.cn);
Chenyuan Feng and Geyong Min are with the Department of Computer Science, University of Exeter, U.K. (email: c.feng@exeter.ac.uk, g.min@exeter.ac.uk).
Wai Ho Mow is with the Department of ECE, The Hong Kong University of Science and Technology, Hong Kong (e-mail: eewhmow@ust.hk);
Jia Ye is with School of Electrical Engineering, Chongqing University, Chongqing 400044, China. (email: jia.ye@cqu.edu.cn).
T.~Q.~S.~Quek is with the Information Systems Technology and Design Pillar, Singapore University of Technology and Design, Singapore 487372 (e-mail: tonyquek@sutd.edu.sg).
}}
\maketitle
\maketitle

\begin{abstract}
The rapid deployment of mega-constellations is driving the long-term vision of space data centers (SDCs), where interconnected satellites form in-orbit distributed computing and learning infrastructures. Enabling distributed federated learning in such systems is challenging because iterative training requires frequent aggregation over inter-satellite links that are bandwidth- and energy-constrained, and the link conditions can be highly dynamic. Communication-efficient aggregation therefore becomes a key enabler for scalable in-orbit intelligence.
In this work, we exploit over-the-air computation (AirComp) as an in-network aggregation primitive. However, conventional coherent AirComp relies on stringent phase alignment, which is difficult to maintain in space environments due to satellite jitter and Doppler effects. To overcome this limitation, we propose OptiVote, a robust and communication-efficient non-coherent free-space optical (FSO) AirComp framework for federated learning toward Space Data Centers. OptiVote integrates sign stochastic gradient descent (signSGD) with a majority-vote (MV) aggregation principle and pulse-position modulation (PPM), where each satellite conveys local gradient signs by activating orthogonal PPM time slots. The aggregation node performs MV detection via non-coherent energy accumulation, transforming phase-sensitive field superposition into phase-agnostic optical intensity combining, thereby eliminating the need for precise phase synchronization and improving resilience under dynamic impairments.
To mitigate aggregation bias induced by heterogeneous FSO channels, we further develop an importance-aware, channel state information (CSI)-free dynamic power control scheme that balances received energies without additional signaling. We provide theoretical analysis by characterizing the aggregate error probability under statistical FSO channels and establishing convergence guarantees for non-convex objectives. Extensive experiments demonstrate that OptiVote consistently outperforms representative baselines in both communication efficiency and learning accuracy, highlighting its potential for scalable and resilient distributed intelligence in future communication-constrained SDCs.
\end{abstract}

\begin{IEEEkeywords}
Distributed federated learning,   majority vote, free-space optical communication, over-the-air computation.
\end{IEEEkeywords}

\section{Introduction}
\IEEEPARstart{T}{he} rapid deployment of large-scale low Earth orbit (LEO) satellite constellations is reshaping the architecture of future communication and computing systems \cite{11271330,9210567}. Beyond serving as transparent relays, modern satellites are increasingly equipped with high-performance onboard processors and advanced sensing capabilities, enabling them to operate as interconnected {space data centers} (SDCs) \cite{11003120,10935306,11293041}. In this emerging paradigm, satellites collaboratively perform data processing, learning, and decision-making directly in orbit, which is essential for latency-sensitive applications such as Earth observation, autonomous space operations, and intelligent network management.

A fundamental challenge in realizing SDCs lies in the training and deployment of large-scale machine learning (ML) models under severe communication constraints. State-of-the-art intelligent space applications typically rely on highly parameterized deep neural networks (DNNs) \cite{11112763}, whose centralized training requires frequent transmission of massive datasets or model parameters to ground stations (GSs) \cite{11036334}. However, the short visibility windows between fast-moving LEO satellites and GSs, coupled with limited downlink bandwidth, render centralized training architectures inefficient and often infeasible for low-latency in-orbit intelligence \cite{11263807}. These limitations motivate a shift toward \emph{distributed learning} paradigms that can fully exploit onboard computing resources while minimizing reliance on ground infrastructure.

Federated learning (FL) has emerged as a promising distributed optimization framework, allowing satellites to train local models and exchange only intermediate updates over inter-satellite links (ISLs) \cite{10353003}. Despite its appeal, deploying FL in LEO satellite networks remains challenging due to the stringent bandwidth and energy constraints of ISLs, as well as their intermittent connectivity and rapidly time-varying topology \cite{10608136,10646360,11132321}. In particular, the repeated exchange of high-dimensional model parameters or gradients can easily overwhelm the limited communication resources, leading to congestion, excessive latency, and vulnerability to single points of failure in centralized aggregation architectures \cite{10418548,10734153}. These characteristics make communication efficiency and robustness central concerns in distributed learning over satellite networks.

To alleviate the communication bottleneck, over-the-air computation (AirComp) has been proposed as a spectrum-efficient aggregation technique that exploits the superposition property of multiple-access channels \cite{10320326,10746330}. By allowing all nodes to transmit simultaneously, AirComp enables aggregation latency that is independent of the number of participating devices, making it particularly attractive for large-scale distributed learning \cite{6557530,9095231}. While AirComp has demonstrated significant gains in terrestrial wireless networks, its extension to satellite systems which rely on free space optical (FSO) inter-satellite links faces fundamental obstacles. Conventional AirComp relies on coherent signal superposition, which requires stringent phase alignment and accurate channel state information (CSI) to ensure constructive aggregation \cite{10315036}. In the space environment, however, maintaining such coherence is exceedingly difficult due to satellite jitter, pointing errors, and severe Doppler-induced phase variations caused by high relative velocities \cite{9852737,11260860}. These impairments can easily turn constructive superposition into destructive interference, severely degrading aggregation accuracy and destabilizing the learning process \cite{10302307}.

Motivated by the inherent fragility of coherent schemes, recent research has turned toward {non-coherent AirComp} architectures that eliminate the need for phase synchronization and instantaneous CSI. Energy-based aggregation schemes, which encode local updates into orthogonal signaling dimensions and rely on non-coherent energy detection at the receiver, have shown improved robustness in dynamic environments \cite{9272666,9771881,10008587}. In parallel, sign-based gradient methods such as signSGD have attracted considerable interest in distributed optimization due to their extreme communication efficiency and provable convergence properties under noisy updates \cite{bernstein2018signsgd,bernstein2019signsgd}. Nevertheless, combining non-coherent AirComp with sign-based aggregation introduces a new challenge: {aggregation bias} induced by heterogeneous channel conditions. In energy-domain majority-vote aggregation, satellites experiencing stronger channels can disproportionately influence the aggregated result, leading to biased gradient estimates and degraded learning performance. Existing power control strategies developed for coherent AirComp in terrestrial networks \cite{9844173,10078151} are not directly applicable, as they typically rely on accurate CSI and do not account for the statistical nature of non-coherent majority-vote detection over time-varying FSO channels.

In this paper, we address these challenges by proposing {OptiVote}, a robust and communication-efficient non-coherent AirComp framework tailored for distributed learning over FSO-based LEO satellite networks. OptiVote integrates signSGD with a majority-vote aggregation mechanism implemented via pulse-position modulation (PPM), enabling phase-agnostic aggregation through non-coherent energy accumulation. To further mitigate channel-induced aggregation bias without incurring additional signaling overhead, we develop an importance-aware, CSI-free dynamic power control scheme that balances the received energies across satellites based on their statistical contributions. Through rigorous theoretical analysis, we characterize the aggregation error probability under statistical FSO fading and establish convergence guarantees for non-convex optimization.  Extensive experiments demonstrate that OptiVote consistently outperforms representative baselines in both communication efficiency and learning accuracy. Our main contributions are summarized as follows:
\begin{itemize}
    \item 
    We design a non-coherent AirComp architecture that integrates signSGD, MV aggregation, and PPM-based signaling, enabling simultaneous uplink aggregation via energy accumulation without phase synchronization.

    \item 
    We develop an importance-aware dynamic power control mechanism that requires no instantaneous CSI or extra signaling, and effectively balances the received energies to reduce bias in energy-domain majority voting over heterogeneous FSO links.

    \item
    We characterize the aggregate decision error probability under statistical FSO fading and establish convergence guarantees for non-convex objectives under non-coherent MV aggregation with communication impairments.

    \item
    We conduct comprehensive experiments and show that OptiVote achieves improved communication efficiency and learning accuracy compared with representative baselines, supporting scalable and resilient in-orbit intelligence in communication-constrained satellite networks.
\end{itemize}
The remainder of this paper is organized as follows. Section~II presents the distributed FL formulation and the FSO communication model tailored for LEO satellite networks. Section~III details the transmitter/receiver design of OptiVote. Section~IV provides theoretical analysis, including aggregation error characterization and convergence results. Section~V reports numerical evaluations, followed by concluding remarks in Section~VI.

\section{Minority follows Majority: AirComp-based Distributed Strategy} 

\subsection{Distributed Federated Learning Model}

We consider an FL system comprising a single aggregating satellite (AS) that acts as the edge server, coordinating the learning process across $\emph{M}$ LEO satellites (space nodes), which act as edge devices.

In this system, at communication round $\emph{n}$, each space node $\emph{m}\in{[M]}$ computes the local gradient $\emph{g}_{m}^{(n)}$ and sends its sign to the AS. The AS performs aggregation and sends the MVs back to all space nodes. Finally, all the space nodes update the local models by utilizing the the MVs. 

Thus, at the $\emph{m}$-th space node, the learning objective is to solve a local optimization based on its local dataset $\mathcal{D}_{m}$:
\begin{equation}
\min _{\mathbf{w} \in \mathbb{R}^{q}} F_{m}(\mathbf{w})=\min _{\mathbf{w} \in \mathbb{R}^{q}} \frac{1}{|D_{m}|} \sum_{\forall(\mathbf{x}_{\ell}, y_{\ell}) \in D} f(\mathbf{w}; \mathbf{x}_{\ell}, y_{\ell}).    
\end{equation}
where $\mathcal{D}_{m}$ is denoted as the local data containing labeled data samples at $\emph{m}$-th space node as $\left\{\left(\mathbf{x}_{\ell}, y_{\ell}\right)\right\} \in \mathcal{D}_{m}$ for $m=1, \ldots, M$,
and $\mathbf{x}_{\ell}$ and $y_{\ell}$ are ${\ell}$-th data sample and its associated label, respectively. $\mathcal{D}=\bigcup_{m=1}^{M}\left\{\mathcal{D}_{m}\right\}$ is the global dataset set and $f(\mathbf{w}, \mathbf{x}, y)$ is the sample-wise loss function indicating the prediction error for example $(\mathbf{x}, y)$ with the FL model
parameters $\mathbf{w}=\left[w_{1}, \ldots, w_{q}\right]^{\mathrm{T}} \in \mathbb{R}^{q}$ , and $\emph{q}$ is the number of model parameters.

The ultimate goal of this centralized FL system is to find the optimal model parameter $\mathbf{w}^{*}$ that minimizes $F(\mathbf{w})$over all distributed datasets, i.e.,
\begin{equation}\label{eq2}
\mathbf{w}^{*} = \min _{\mathbf{w} \in \mathbb{R}^{q}} F(\mathbf{w})=\min _{\mathbf{w} \in \mathbb{R}^{q}} \frac{1}{|D|} \sum_{\forall(\mathbf{x}, y) \in D} f(\mathbf{w}; \mathbf{x}, y),
\end{equation}
where $F(\mathbf{w})$ is the global loss function. Our focus is on the uplink gradient aggregation process using non-coherent AirComp, as shown in Fig. \ref{fig1}. 

\begin{figure*}[t]
\centering
\includegraphics[width=1\linewidth]{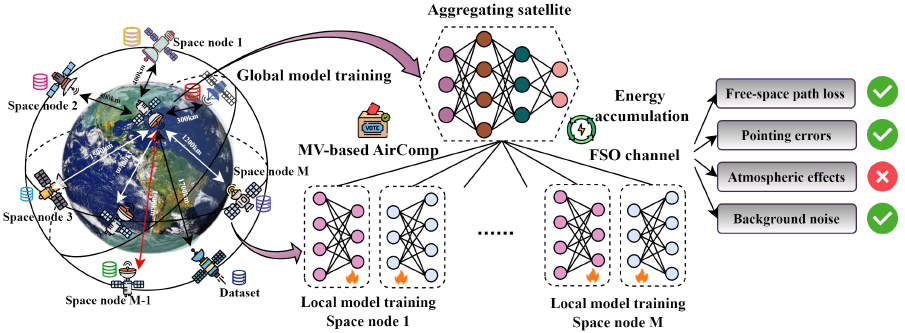}
\caption{Transceiver design on distributed Learning via non-coherent over-the-air computation based on importance-aware majority vote strategy.}
\label{fig1}
\end{figure*}

Specifically, all space nodes and AS periodically transmit model parameters by uplink and downlink communication, and space nodes involved in the learning process share the same FSO channel medium. Without loss of generality, at the $\emph{n}$-th communication round, the following processing will be:
\begin{itemize}
\item \textbf{Preparation phase:} The AS selects $\emph{M}$ active space nodes for model training procedure; 
\item \textbf{DL broadcast:} The AS pre-broadcasts global models $\mathbf{w}^{(n)}$ and global gradient $\mathbf{g}_{\ell}^{(n)}$ for this round to selected space nodes, which work together to train FL model.
\item \textbf{Local model update:} Each space node $\emph{m}$ exploits a mini-batch stochastic gradient descent (SGD) method to calculate its local gradient $\tilde{\mathbf{g}}_{m}^{(n)} \triangleq[\tilde{\emph{g}}_{m, 1}^{(n)}, \ldots, \tilde{\emph{g}}_{m, q}^{(n)}]^{\mathrm{T}}$ with respect to the selected data batch and the current received global model $\mathbf{w}^{(n)}$ as
\begin{equation}
\tilde{\mathbf{g}}_{m}^{(n)}=\nabla F_{m}\left(\mathbf{w}^{(n)}\right)=\frac{1}{d_{\mathrm{b}}} \sum_{\forall\left(\mathbf{x}_{\ell}, y_{\ell}\right) \in \tilde{\mathcal{D}}_{m}} \nabla f\left(\mathbf{w}^{(n)}, \mathbf{x}_{\ell}, y_{\ell}\right),
\end{equation}
where $\nabla$ represents the gradient operator and $\tilde{\mathcal{D}}_{m} \subset\mathcal{D}_{m}$ is selected data batch from local data set and $d_{\mathrm{b}}=|\tilde{\mathcal{D}}_{m}|$ as the batch size. The initial local model is $\mathbf{w}^{(n,0)}_{m} = \mathbf{w}^{n}$, and the local model update is represented as
\begin{equation}
\mathbf{w}^{(n+1)}=\mathbf{w}^{(n)}-\eta \mathbf{g}^{(n)},
\end{equation}
where $\mathbf{g}^{(n)}$ represent the gradient of the global loss $F(\mathbf{w}^{(n)})$ with respect to $\mathbf{w}^{(n)}$. 
\item \textbf{UL aggregation process:} The selected space nodes upload the local stochastic gradients to AS over the FSO links, denoted as $\tilde{\mathbf{g}}_{m}^{(n)}$ for $m=1, \ldots, M$. Hence, the $\emph{i}$-th element of the global aggregated gradient at the AS would be computed as follows: 
\begin{equation}
\mathbf{g}_{\ell}^{(n)}=\sum_{m=1}^{M}{\tilde{\mathbf{g}}_{m, i}^{(n)}}.
\end{equation}
\item \textbf{Global model update:} The AS receives the aggregated gradients by uploaded from the space nodes and then conducts a global model update.
\end{itemize}

This iterative procedure repeats consecutively until a predetermined convergence criterion is achieved. 

\subsection{SignSGD With Majority Vote}

Under the background of FL, we combine SignSGD \cite{bernstein2018signsgd} with a majority vote strategy to solve the optimization objective in Eq.(\ref{eq2}). This approach drastically reduces the uplink communication payload by transmitting only the specific sign of the stochastic gradients.

Thus, the renewed aggregation process is shown as follows:
\begin{itemize}
\item \textbf{One-bit quantization:} All selected EDs upload the signs of their local stochastic gradients to the AS over the FSO links, which are denoted as $\bar{\mathbf{g}}_{m}^{(n)}$ for $m=1, \ldots, M$. The $\emph{i}$-th element of the sign vector from $\emph{m}$-th space nodes is given by:
\begin{equation}  
\bar{\mathbf{g}}_{m,i}^{(n)} = \operatorname{sign}(\tilde{\mathbf{g}}_{m, i}^{(n)}).
\end{equation} 
\item \textbf{Majority vote estimation:} The AS enforces the parameter MV as the estimate of the $\emph{i}$-th global gradient element. In an ideal FSO channel, the MV is computed as:
\begin{equation}
v_{i}^{(n)} = \operatorname{sign}\left(\sum_{m=1}^{M} \bar{\mathbf{g}}_{m, i}^{(n)}\right).
\end{equation}
This process is what is commonly known as majority vote, where each space node votes with the sign of its true gradient.
\item \textbf{Global model update:} Afterwards, the AS pushes the MV vector $\mathbf{v}^{(n)}=[v_{1}^{(n)}, \ldots, v_{q}^{(n)}]^{\mathrm{T}}$ back to the space nodes. The models at the space nodes are updated based on the MV direction: 
\begin{equation}
\mathbf{w}^{(n+1)}=\mathbf{w}^{(n)}-\eta \mathbf{v}^{(n)}.
\end{equation}
\end{itemize}  

\begin{remark} 
\emph{Space Nodes transmit their parameters over FSO links that are inherently unreliable due to channel fading, pointing errors, and noise. These factors introduce transmission errors that, when aggregated, degrade FL performance. Also, the uplink process from the space nodes to the AS is more constrained than the downlink in terms of both bandwidth and network throughput, thus necessitating a focus on uplink communication. Meanwhile, we assume that $\mathbf{w}^{(n)}$ transmitted via the downlink channel can be received by all moving space nodes in an error-free manner. The key challenge is to calculate an accurate estimate on MV vector $\mathbf{v}^{(n)}$.}
\end{remark}

\subsection{Free-space Optics Link Integrated Aircomp}
We concentrate on the uplink communication process for non-coherent FSO AirComp. The aggregation of quantized signs $\bar{\mathbf{g}}_{m,i}^{(n)}$ is accomplished by exploiting the linear superposition property of optical intensity in the FSO channel.

Specifically, each space node transmits its one-bit gradient sign by utilizing PPM. And PPM encodes the sign ($\pm 1$) by activating one of $\emph{A}$ orthogonal time slots ($\tau \in \{0, 1, \ldots, A-1\}$) on the FSO link. The FSO channel coefficient $\emph{I}_{m}^{(n)}$ from space node $\emph{m}$ to the AS is modeled by the combined effect of geometric path loss, atmospheric turbulence, and pointing errors (Jitter), leading to optical intensity fluctuations. We assume a block fading scenario where $\emph{I}_{m}^{(n)}$ remains constant within a communication round $\emph{n}$.

Thus, at the $\emph{n}$-th communication round, the instantaneous received optical intensity $r_{\tau}^{(n)}$ at the AS in the $\tau$-th orthogonal PPM slot is given by the superposition of intensities from all $\emph{M}$ active space nodes:
\begin{equation}
\emph{r}_{\tau}^{(n)} = \sum_{m=1}^{M} C_{R}P_{m}^{(n)} I_{m}^{(n)}t_{m, \tau}^{(n)} + \epsilon_{\tau}^{(n)} \text{,} \label{eq:fso_intensity_final_tau}
\end{equation}
where $C_{R} \triangleq \eta_{R} G_{R}=1$ is the constant receiver gain, encompassing $\eta_{R}$ (receiver optical efficiency) and $G_{R}$ (receiver telescope gain) and $P_{m}^{(n)}$ is the instantaneous transmit optical power of $\emph{m}$-th space node. Moreover, $\emph{I}_{m}^{(n)}$ is the instantaneous intensity attenuation factor of the FSO channel. $t_{m, \tau}^{(n)} \in \{0, 1\}$ is the PPM signal transmitted by $\emph{m}$-th space node in slot $\tau$ ( $t_{m, \tau}^{(n)}$ equals to 1 if the sign corresponds to slot $\tau$, and 0 otherwise). $\epsilon_{\tau}^{(n)}\sim \mathcal{N}\left(0, \sigma^{2}\right)$ is the additive noise, typically modeled as a Gaussian process.

Also, $P_{m}^{(n)}$ is the instantaneous transmit power. The transmission of each space node is subject to a long-term average power constraint:
\begin{equation}
\mathbb{E}\left[P_{m}^{(n)}\right] \leq P_{\text{avg}}, \forall m \text{,} \label{eq:power_constraint_final_tau}
\end{equation}
where $P_{\text{avg}} > 0$ is the maximum average transmit power.

With this non-coherent FSO scheme, the channel attenuation factor $I_{m}^{(n)}$ directly weights the contribution of each space node's vote in the received intensity. This causes the MV mechanism to suffer from aggregation weight imbalance (clustering bias), as satellites with favorable channel conditions dominate the aggregation. Given the impracticality of relying on instantaneous CSI due to high orbital velocity, it is necessary to design an accurate scheme to offset this bias, thereby enhancing the overall system efficiency.

\begin{figure*}[t]
\centering
\includegraphics[width=1\linewidth]{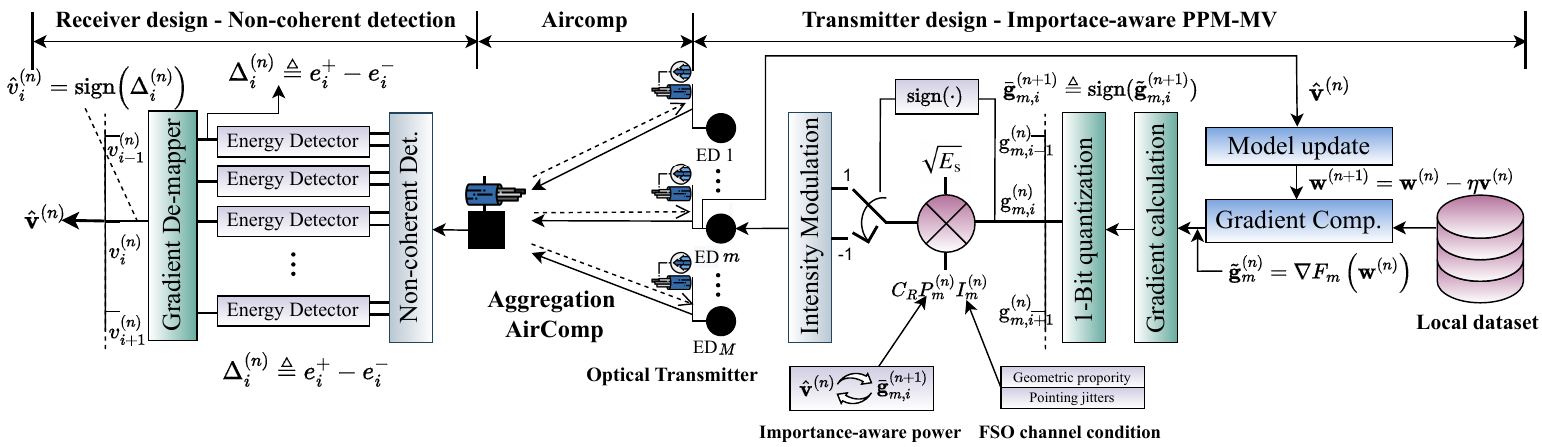}
\caption{Transceiver design on distributed Learning via non-coherent over-the-air computation based on importance-aware majority vote strategy.}
\label{fig2}
\end{figure*}


\section{OptiVote: Phase-Less Signal Aggregation and Importance-Aware Power Design} 
In this section, we develop a non-coherent FSO AirComp framework shown in Fig. \ref{fig2}, which enables reliable MV-based model aggregation without requiring explicit CSI or phase synchronization. 

\subsection{Transmitter Design - Importance-Aware PPM-MV}
To avoid stringent phase alignment demands in FSO links, we design an importance-aware transmitter architecture predicated on PPM-based MV (PPM-MV) with AirComp strategy. 

Specifically, our scheme allocates $A$ orthogonal time slots, where each space node performs a low-complexity operation to transmit one-bit gradient signs using binary PPM over FSO links. 
Let $f$ be a bijective mapping that assigns each gradient coordinate $i\in\{1,2,\ldots,q\}$ to a distinct ordered pair of orthogonal time slots $(\tau_i^+,\tau_i^-)$ within a frame, where $\tau_i^+,\tau_i^-\in\{1,\ldots,A\}$ and $\tau_i^+\neq\tau_i^-$. 
At round $n$, node $m$ applies one-bit sign quantization
$\overline{\mathbf{g}}_{m,i}^{(n)}\triangleq \mathrm{sign}(\tilde{\mathbf{g}}_{m,i}^{(n)})\in\{-1,+1\}$.
Based on $\overline{\mathbf{g}}_{m,i}^{(n)}$, the transmitted optical symbol at slot $\tau_i^+$ is
\begin{equation}\label{eq11}
t_{m,\tau_i}^{(n)}=
\begin{cases}
\sqrt{E_s}, & \tau=\tau_i^+,~\overline{\mathbf{g}}_{m,i}^{(n)}=+1,\\
0, & \tau=\tau_i^-,~\overline{\mathbf{g}}_{m,i}^{(n)}=-1,
\end{cases}
\end{equation}
and the transmitted optical symbol at slot $\tau_i^-$ is
\begin{equation}\label{eq12}
t_{m,\tau_i}^{(n)}=
\begin{cases}
0, & \tau=\tau_i^-,~\overline{\mathbf{g}}_{m,i}^{(n)}=-1,\\
\sqrt{E_s}, & \tau=\tau_i^+,~\overline{\mathbf{g}}_{m,i}^{(n)}=+1,
\end{cases}
\end{equation}
respectively, where $\sqrt{E_s}=1$ is a factor to normalize the symbol energy. As opposed to the OBDA scheme in \cite{9272666}, which relies on complex amplitude modulation and thus necessitates strict phase synchronization, our scheme exploits Eq.(\ref{eq11}) and Eq.(\ref{eq12}) to indicate two orthogonal time slots for uploading the signs of the local stochastic gradients. To transmit the encoded gradients, we use the PPM-MV modulation scheme, which is discussed as follows:

\begin{itemize}
    \item \textit{General PPM Configuration}: The effective time slots serve as the resource for encoding gradients, with transmitted signals consisting of optical pulses. Consequently, the AirComp scheme designates two distinct positions within a frame for voting. Specifically, this design aims to motivate the receiver detect the MVs via non-coherent energy accumulation, thereby bypassing the requirement for instantaneous CSI.
    \item \textit{Adjacency-Based Mapping (PPM-MV)}: As a robust configuration on the mapping function $f$, we choose the two voting slots to be adjacent within a frame, i.e., $\tau_i^-=\tau_i^+ + 1$ (or more generally $|\tau_i^+-\tau_i^-|T_s \ll T_c$) for all $i$. 
    Since the two neighboring slots fall within the same coherence interval, they experience highly correlated intensity fading in FSO links (e.g., due to turbulence and pointing jitter), i.e.,
    $I_{m,\tau_i^+}^{(n)} \approx I_{m,\tau_i^-}^{(n)}$. 
    This sign-independent fading symmetry facilitates reliable MV recovery via differential energy detection by comparing the accumulated energies across the two slots. We refer to this specific adjacency-based design as PPM-based MV.
\end{itemize}

However, the intensity-domain superposition in FSO AirComp introduces a critical challenge, i.e., aggregation bias. 
In distributed architectures, factors such as data availability and training gains result in an inherently distinct contribution to the global model across satellites. Ideally, aggregation decisions should reflect this perceived importance rather than being dominated by channel-stochasticity, which leads to unfair aggregation and degraded convergence.





\textbf{Importance-Aware adaptive Power Allocation:} During the distributed learning, satellites may contribute unequally to the global update due to heterogeneous data and dynamics training, and non-coherent power accumulation may further introduce channel-induced energy dominance. 

To regulate the effective contribution of each satellite and mitigate aggregation bias without instantaneous CSI, we design an importance-aware power control strategy driven by a unified importance weight that fuses data significance, local training gain, and directional consistency.

Specifically, after the AS obtains the MV decision $\hat{\mathbf v}^{(n)}=[\hat v_1^{(n)},\ldots,\hat v_q^{(n)}]$ via differential energy detection and broadcasts it, satellite $m$ evaluates an importance-consistency score based on the agreement between its local sign vector and the broadcast MV:
\begin{equation}
a_m^{(n)} \triangleq \frac{1}{q}\sum_{i=1}^{q}\mathbb{I}\!\left[\overline {\mathbf{g}}_{m,i}^{(n)}=\hat v_i^{(n-1)}\right]\in[0,1],
\end{equation}
where $\overline{\mathbf{g}}_{m,i}^{(n)}=\mathrm{sign}(\tilde{\mathbf{g}}_{m,i}^{(n)})$ denotes the one-bit local gradient sign.
Thus, the transmit optical power is updated via the projected recursion
\begin{equation}
P_m^{(n+1)}=\Pi_{[P_{\min},\,P_{\max}]}\!\Big(
P_m^{(n)}+\rho\big(a_m^{(n)}-\bar{a}^{(n)}\big)
\Big),
\end{equation}
where,
\begin{equation}
\bar{a}^{(n)} \triangleq \frac{1}{M}\sum_{m=1}^{M}a_m^{(n)} .
\end{equation}
where $\rho>0$ is a stepsize and $\Pi_{[P_{\min},P_{\max}]}(\cdot)$ denotes projection onto the feasible power interval.

Since the MV-based global direction can be regarded as error-free (or near error-free), we assign transmit powers based on the agreement between each space node's local sign and the global decision rather than using identical powers. This CSI-free adjustment makes the aggregated update increasingly consistent with the desired convergence direction.

\subsection{Receiver Design - Non-coherent Energy Detection}
With PPM-MV procedure, the AS performs CSI-free uplink aggregation by differential energy detection over the adjacent slot pair $(\tau_i^+,\tau_i^-)$ for each coordinate $i$. Specifically, the accumulated received energies at the AS are
\begin{equation}\label{eq16}
e_{i}^{(n),+} \triangleq \emph{r}_{\tau_i^+}^{(n)}
= \sum_{m:\,\bar{\mathbf{g}}_{m,i}^{(n)}=+1} C_R P_m^{(n)} I_m^{(n)} \sqrt{E_s} \;+\; \epsilon_{\tau_i^+}^{(n)},
\end{equation}
and
\begin{equation}\label{eq17}
e_{i}^{(n),-} \triangleq \emph{r}_{\tau_i^-}^{(n)}
= \sum_{m:\,\bar{\mathbf{g}}_{m,i}^{(n)}=-1} C_R P_m^{(n)} I_m^{(n)} \sqrt{E_s} \;+\; \epsilon_{\tau_i^-}^{(n)},
\end{equation}
respectively. The receiver at the AS detects the MV for the $i$-th gradient with an energy detector as
\begin{equation}
\Delta_i^{(n)} \triangleq e_{i}^{(n),+} - e_{i,-}^{(n),+},
\end{equation}
and recovers the MV decision by
\begin{equation}\label{eq19}
\hat v_i^{(n)} = \mathrm{sign}\!\left(\Delta_i^{(n)}\right),
\end{equation}
where $\hat{\mathbf v}^{(n)}=[\hat v_1^{(n)},\ldots,\hat v_q^{(n)}]$. Note that we do not utilize any strategy to resolve interference between space nodes in Eq.(\ref{eq16}) and Eq.(\ref{eq17}), as we are not concerned with the sign of the local gradient. Instead, we leverage this interference for aggregation by comparing the energy values on paired PPM time slots $(\tau_i^+,\tau_i^-)$ to detect MVs in Eq. (\ref{eq19}).

Then, the AS broadcasts $\hat{\mathbf v}^{(n)}$ to all satellites, and the global model is updated along the MV direction as
\begin{equation}
\mathbf w^{(n+1)} = \mathbf w^{(n)} - \eta\, \hat{\mathbf v}^{(n)}.
\end{equation}

In practice, MVs can be transmitted by conventional methods, for which the detection is all with some errors. In contrast, the reception of MV vectors by Space nodes is assumed to be perfect in the scenarios of \cite{jin2021stochasticsign}. Thus, we assume that the MVs can be transmitted to the Space node perfectly, due to the high transmit power available at the AS and the use of the whole downlink bandwidth for broadcasting.

\section{Error Probability Analysis and
Convergence Rate Performance}

In this section, we theoretically characterize OptiVote by quantifying the error probability of MV detection under fading channels and establish its convergence performance.

\subsection{Reliable Assumptions on Theoretical Analysis}
To permit further convergence and error probability analysis, we specify several standard assumptions \cite{bernstein2018signsgd}, \cite{9272666}. Moreover, to extend the theoretical framework beyond convex loss functions to neural networks, we introduce lower bound requirements \cite{allenzhu2018natasha}.

\begin{assumption}[Lower-bounded Objective]
\label{Assumption1}
For all parameter vectors $\mathbf{w}$, the objective function $F(\mathbf{w})$ is bounded below by a finite constant $F^*$, i.e., $F(\mathbf{w}) \ge F^*$, $\forall \mathbf{w}$.  
\end{assumption}

\begin{assumption}[Smoothness]
Let $\mathbf{g}\triangleq\nabla F(\mathbf{w})$ denote the gradient of the loss function $F(\mathbf{w})$ evaluated at $\mathbf{w}$. For all $\mathbf{w}$ and $\mathbf{w'}$, the expression is given by
\begin{equation}
F(\mathbf w') \le F(\mathbf w) + \nabla F(\mathbf w)^{\mathsf T}(\mathbf w'-\mathbf w)
+ \frac{1}{2}\sum_{i=1}^{q} L_{i}\left(\emph{w}_{i}^{\prime}-\emph{w}_{i}\right)^{2},
\end{equation}
\end{assumption}
where we can assume that there exists a vector of non-negative constants $\mathbf{L}=\left[L_{1}, \ldots, L_{q}\right]^{\mathrm{T}}$.

\begin{assumption}[Bounded Variance]
The stochastic gradient estimates $\{\tilde{\mathbf{g}}_{m}=[\tilde{\text{g}}_{m, 1}, \ldots, \tilde{\text{g}}_{m, q}]^{\mathrm{T}}=\nabla F_{m}(\mathbf{w})\}$ are independent and unbiased estimates of  $\mathbf{g}\!=\![\text{g}_{1}, \ldots,\text{g}_{q}]^{\mathrm{T}}\!=\!\nabla F(\mathbf{w})$ with a coordinate bounded variance, i.e.,
\begin{equation}
\mathbb{E}\left[\tilde{\mathbf{g}}_{\emph{m}}\right]=\mathbf{g}, \forall \emph{m},
\end{equation}
\begin{equation}
 \mathbb{E}\left[\left(\tilde{\text{g}}_{\emph{m}, \emph{i}}-\text{g}_{\emph{i}}\right)^{2}\right] \leq \sigma_{\emph{i}}^{2} / d_{\mathrm{b}}, \forall \emph{m}, \emph{i},
\end{equation}
\end{assumption}
where $\boldsymbol{\sigma}=\left[\sigma_{1}, \ldots, \sigma_{q}\right]^{\mathrm{T}}$ is a non-negative constant vector, and where $\tilde{\text{g}}_{k, i}$ and $\text{g}_{i}$ denote the $\emph{i}$-th element of $\tilde{\mathbf{g}}_{k}$ and $\mathbf{g}$.

Apart from these above analysis, another significant assumption is that the data-stochasticity induced gradient noise. Meanwhile, this assumption causes the discrepancy between $\tilde{\mathbf{g}}_{k}$ and $\mathbf{g}$, which is unimodal described as follow.

\begin{assumption}[Unimodal, Symmetric Gradient Noise]
For any given $\mathbf{w}$, each elements of
$\tilde{\mathbf{g}}_{\emph{m}}$, $\forall \emph{m}$, has a unimodal
distribution that is also symmetric around its mean.
\end{assumption}

\subsection{Received Signal Power of MV}

The proposed OptiVote scheme leads to a fundamentally different aggregation strategy compared to coherent AirComp, as it determines the correct MVs in Eq. (\ref{eq19}) probabilistically by comparing the accumulated energies $e_{i}^{+}$ and $e_{i}^{-}$. 

To elaborate on this, we consider the specific characteristics of inter-satellite links, i.e., the optical signal propagates in a vacuum environment, effectively eliminating atmospheric turbulence.   Consequently, the channel impairments are primarily governed by deterministic geometric path loss and stochastic pointing errors caused by platform vibrations.  
We assume these stochastic impairments and the additive noise are independent across different space nodes.

Let $M_{\emph{i}}^{+}$ and $M_{\emph{i}}^{-}= M-M_{\emph{i}}^{+}$ denote the number of space nodes that vote for $+1$ and $-1$ for the $\emph{i}$-th gradient element, respectively. Based on the proposed system, we derive the means of accumulated energies, denoted as $\mu_{\emph{i}}^{+}$ and $\mu_{\emph{i}}^{-}$, in the following lemma.

\begin{lemma}[Energy Detection]
For a given vote allocation $M_{i}^{+}$ and $M_{i}^{-}$, $\mu_{i}^{+} \triangleq \mathbb{E}[e_{i}^{+}]$ and $\mu_{i}^{-} \triangleq \mathbb{E}[e_{i}^{-}]$, are calculated as:
\begin{equation}
\begin{aligned}
\mu_{i}^{+} \triangleq \mathbb{E}[e_{i}^{+}] &= M_{i}^{+} \vartheta + \sigma_n^2, \label{eq:mu_plus}
\end{aligned}
\end{equation}
and,
\begin{equation}
\begin{aligned}
\mu_{i}^{-} \triangleq \mathbb{E}[e_{i}^{-}] &= M_{i}^{-} \vartheta + \sigma_n^2, \label{eq:mu_minus}
\end{aligned}
\end{equation}
respectively. Here, $\sigma_{n}^{2}$ is the noise variance, and $\vartheta$ represents the received signal energy per satellite. Considering the proposed adaptive scheme and the FSO channel impairments, $\vartheta$ is defined as:
\begin{equation}
\label{eq:theta_def}
\vartheta = C_R \sqrt{E_s} \cdot \underbrace{P_{\text{avg}}}_{\text{Power Control}} \cdot \underbrace{\lambda}_{\text{Channel Interference}},
\end{equation}
where:
\begin{itemize}
    \item $P_{\text{avg}} \triangleq \mathbb{E}[P_{m}]$ is the average transmit optical power governed by the proposed scheme. Due to the projection operation $\Pi_{[P_{\min}, P_{\max}]}(\cdot)$, $P_{m}^{(n)}$ evolves as a bounded stochastic process. Consequently, its expectation is well-defined and strictly constrained within $[P_{\min}, P_{\max}]$, typically stabilizing around the target average budget $P_{\text{avg}}$.
    
    \item $\lambda \in (0, 1]$ denotes the normalized FSO channel efficiency factor. Based on the independence between spatial topology and pointing jitters, $\lambda$ is derived as the product of geometric efficiency and pointing efficiency:
    \begin{equation}\label{eq27}
    \lambda = \underbrace{\left( \frac{3 C_{\text{FSPL}} (D_{\max} - D_{\min})}{D_{\max}^3 - D_{\min}^3} \right)}_{\text{Geometric Efficiency}} \cdot \underbrace{\left( A_0 \frac{\xi^2}{\xi^2 + 1} \right)}_{\text{Pointing Efficiency}}.
    \end{equation}
    The first term accounts for the ensemble average path loss over the 3D spherical shell distribution of satellites, while the second term quantifies the energy reduction due to stochastic pointing errors modeled by the Beckmann distribution parameter $\xi$.
\end{itemize}
\end{lemma}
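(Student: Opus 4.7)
The plan is to derive $\mu_i^{+}$ and $\mu_i^{-}$ directly from the received-intensity expressions in Eqs.~(\ref{eq16}) and (\ref{eq17}) by taking expectations term-by-term, and then collapse the resulting product into the compact form $\vartheta$ of Eq.~(\ref{eq:theta_def}). Since the two cases are structurally identical up to the partition $M_i^{+}$ versus $M_i^{-}$ and the noise index $\tau_i^{+}$ versus $\tau_i^{-}$, I would carry out the full derivation for $\mu_i^{+}$ and invoke symmetry for $\mu_i^{-}$.

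First, I would apply linearity of expectation over the $M_i^{+}$ voters and use the stated mutual independence of the transmit powers, channel attenuations, and additive noise across satellites to factor
\begin{equation*}
\mathbb{E}[e_i^{+}] = C_R\sqrt{E_s}\sum_{m:\,\bar{\mathbf g}_{m,i}^{(n)}=+1}\mathbb{E}[P_m^{(n)}]\,\mathbb{E}[I_m^{(n)}] + \mathbb{E}[\,\text{noise term}\,].
\end{equation*}
Second, I would identify $\mathbb{E}[P_m^{(n)}] = P_{\text{avg}}$ by appealing to the stationary regime of the projected recursion: since the centered drift $\rho(a_m^{(n)}-\bar{a}^{(n)})$ has zero cross-sectional mean by construction, the bounded recursion on $[P_{\min},P_{\max}]$ admits an invariant distribution whose mean is calibrated to the long-term budget in Eq.~(\ref{eq:power_constraint_final_tau}). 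Third, I would evaluate $\mathbb{E}[I_m^{(n)}]=\lambda$ as the product of two independent factors aligned with Eq.~(\ref{eq27}): for the geometric part, model the inter-satellite distance $D$ as volume-uniform in the spherical shell $[D_{\min},D_{\max}]$ so the radial density is proportional to $r^2$, and average the inverse-square path loss,
\begin{equation*}
\mathbb{E}\!\left[\frac{C_{\text{FSPL}}}{D^2}\right] = \frac{\int_{D_{\min}}^{D_{\max}} (C_{\text{FSPL}}/r^2)\,r^2\,dr}{\int_{D_{\min}}^{D_{\max}} r^2\,dr} = \frac{3C_{\text{FSPL}}(D_{\max}-D_{\min})}{D_{\max}^3-D_{\min}^3},
\end{equation*}
and for the pointing part, invoke the Beckmann closed form for the Rayleigh radial-displacement model to obtain the expected pointing efficiency $A_0\xi^2/(\xi^2+1)$.

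Substituting these two closed forms into $C_R\sqrt{E_s}\cdot P_{\text{avg}}\cdot\lambda$ recovers the per-satellite signal contribution $\vartheta$, so the summation telescopes to $M_i^{+}\vartheta$. The additive noise contribution is reconciled with the $\sigma_n^2$ term by interpreting $e_i^{\pm}$ as the slot-integrated energy output of a square-law detector, whereby the zero-mean Gaussian noise in Eq.~(\ref{eq:fso_intensity_final_tau}) contributes its second moment $\sigma^2$ additively into the mean of the energy variable. The expression for $\mu_i^{-}$ then follows by replacing $M_i^{+}$ with $M_i^{-}$ and the noise index $\tau_i^{+}$ with $\tau_i^{-}$, with the independence of $\epsilon_{\tau_i^{+}}^{(n)}$ and $\epsilon_{\tau_i^{-}}^{(n)}$ ensuring the two means decouple.

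The main obstacle I expect is rigorously justifying $\mathbb{E}[P_m^{(n)}]=P_{\text{avg}}$ from the projected recursion, since the clipping onto $[P_{\min},P_{\max}]$ introduces nonlinear boundary effects that complicate the stationarity argument. I would handle this via an ergodic argument on the bounded Markov chain induced by the recursion: boundedness of the state and the zero cross-sectional mean of the innovation imply existence of an invariant distribution whose mean is pinned by the long-term power constraint in Eq.~(\ref{eq:power_constraint_final_tau}). A secondary, more mechanical subtlety is verifying the exact constant in the Beckmann closed form for the pointing loss, which I would confirm by direct integration against the Rayleigh displacement density rather than citing it verbatim.
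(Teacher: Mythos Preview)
Your proposal is correct and follows essentially the same route as the paper's proof: linearity of expectation over the voting set, factoring $\mathbb{E}[P_m^{(n)}I_m^{(n)}]$ via independence into $P_{\text{avg}}\lambda$, computing the geometric factor by averaging $C_{\text{FSPL}}/D^2$ against the $r^2$-weighted shell density, and obtaining the pointing factor $A_0\xi^2/(\xi^2+1)$ from the standard zero-boresight jitter model. The paper handles the two points you flag as obstacles no more rigorously than you do---it simply asserts the independence $P_m^{(n)}\perp I_m^{(n)}$ and declares the noise contribution to the expected energy to be $\sigma_n^2$ without a detailed square-law argument---so your plan is, if anything, slightly more careful than what appears in Appendix~\ref{app:l1}.
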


\begin{proof}
    Please refer to Appendix \ref{app:l1}.
\end{proof}

\subsection{Bit Error Probability Analysis}

To facilitate the error probability analysis, we adopt the following standard assumptions regarding gradient statistics and signal distributions.
\begin{assumption}[Independent, identical, and unbiased gradients]
The local stochastic gradient estimates are independent
and unbiased, i.e., $\mathbb{E}_{\tilde{\mathcal{D}}_{m}}[\tilde{\text{g}}_{m, i}^{(t)}]=\text{g}_{i}^{(n)}, \forall m, i$.
\end{assumption}

\begin{assumption}[Statistical Probability Distribution on Energy Detection] 
Given the vote counts $\emph{M}_{\emph{i}}^{\textbf{\texttt{+}}}$ and $\emph{M}_{\emph{i}}^{\textbf{\texttt{-}}}$, the accumulated decision variables $\emph{e}_{\emph{i}}^{\textbf{\texttt{+}}}$ and $\emph{e}_{\emph{i}}^{\textbf{\texttt{-}}}$ are modeled as independent random variables following an exponential distribution with means $\mu_{\emph{i}}^{\textbf{\texttt{+}}}$ and $\mu_{\emph{i}}^{\textbf{\texttt{-}}}$, respectively.
\end{assumption}

\begin{remark}
\emph{From a physical perspective, the aggregated energy is characterized by a Gamma distribution. However, an exact error probability analysis under this model leads to mathematically intractable expressions involving hyper-geometric functions. To facilitate a tractable closed-form derivation, we adopt the exponential distribution model. Since the exponential distribution corresponds to a Gamma distribution with a shape parameter $G=1$, it exhibits a heavier tail compared to the actual aggregated signals ($G>1$). This assumption effectively models the worst-case fading scenario, enabling the derived error probability a strict and conservative upper bound on the error probability.}
\end{remark}

Let $\emph{P}_\emph{i}^{\text{err}}$ denote the probability on misidentifying the sign of $\emph{i}$-th global gradient element $\text{g}_{\emph{i}}^{(\emph{n})}$. Without loss of generality, assume the true global sign is positive, i.e., $\text{sign}(\text{g}_{\emph{i}}^{(\emph{n})}) \texttt{=} \texttt{+}1$. This error probability is defined as:
\begin{equation}\label{eq39}
\emph{P}_\emph{i}^{\text{err}} \triangleq \mathbb{P}\left(\text{sign}\left(\Delta_{\emph{i}}^{(\emph{n})}\right) \neq \text{sign}\left(\text{g}_{\emph{i}}^{(\emph{n})}\right)\right) = \mathbb{P}\left(\emph{e}_\emph{i}^\textbf{\texttt{+}} < \emph{e}_\emph{i}^\textbf{\texttt{-}}\right).
\end{equation}

The aggregation correctness depends on the number of space nodes that correctly vote for $\texttt{+}1$. Thus, let $\emph{Z}$, be a random variable for counting the number of space nodes with the correct decision i.e., $\text{sign}(\text{g}_{\emph{i}}^{(\emph{n})}) \texttt{=} \texttt{+}1$. The random variable $\emph{Z}$ can then be modeled as the sum of $\emph{M}$ independent Bernoulli trials, i.e., a binomial variable with the success and failure probabilities given by
\begin{align}
&\emph{p}_\emph{i} \triangleq \mathbb{P}\left[\text{sign}\left(\tilde{\text{g}}_{\emph{m,i}}^{(\emph{n})}\right)=\text{sign}\left(\text{g}_{\emph{i}}^{(\emph{n})}\right)\right],\\
&\emph{q}_\emph{i} \triangleq \mathbb{P}\left[\text{sign}\left(\tilde{\text{g}}_{\emph{m,i}}^{(\emph{n})}\right) \neq \text{sign}\left(\text{g}_{\emph{i}}^{(\emph{n})}\right)\right],
\end{align}
respectively, for all $\emph{m}$. where $\emph{p}_\emph{i} + \emph{q}_\emph{i}\texttt{=}1$, which are intuitively determined by the randomness of the data. For a more accurate formalization, $\emph{P}_\emph{i}^{\text{err}}$ for MV can be obtained as follows:

\begin{lemma}[Error Probability Upper Bound]
Let $\xi \triangleq \vartheta / \sigma_n^2$ denote the effective Signal-to-Noise Ratio (SNR) per satellite. Under the proposed OptiVote scheme, the error probability for the $i$-th gradient element is bounded by:
\begin{equation}
\label{eq:error_bound}
P_{i}^{\mathrm{err}} \leq \underbrace{\frac{M q_i}{M + 2/\xi}}_{\text{Data Noise Impact}} + \underbrace{\frac{1/\xi}{M + 2/\xi}}_{\text{Channel Noise Impact}}.
\end{equation}
Furthermore, incorporating the gradient noise bound for unimodal symmetric distributions, we have:
\begin{equation}
\emph{P}_{\emph{i}}^{\mathrm{err}} \leq \frac{\emph{M}({\sqrt{2} \alpha_{\emph{i}}})/({3\left|\text{g}_{\emph{i}}^{(\emph{n})}\right| \sqrt{\emph{d}_{\emph{b}}}})}{\emph{M}\texttt{+} 2/\xi} + \frac{1/\xi}{\emph{M}\texttt{+} 2/\xi}.
\end{equation}
\end{lemma}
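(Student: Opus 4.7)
The plan is to bound $P_i^{\mathrm{err}}$ in two stages: first compute the conditional error probability given the random vote counts $(M_i^+, M_i^-)$ using the exponential-distribution model of Assumption~6 together with Lemma~1, then lift this to an unconditional bound by taking expectation over those counts (which are binomial by Assumption~5), and finally invoke a Gauss-type inequality for unimodal symmetric noise to replace the per-coordinate sign-flip probability $q_i$ by its explicit dependence on $|g_i^{(n)}|$, $\alpha_i$, and $d_b$.

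Without loss of generality assume $\mathrm{sign}(g_i^{(n)})=+1$, so by Eq.~\eqref{eq39} the error event reduces to $\{e_i^+ < e_i^-\}$. Conditioned on $(M_i^+, M_i^-)$, Assumption~6 and Lemma~1 make $e_i^+$ and $e_i^-$ independent exponentials with means $\mu_i^+ = M_i^+\vartheta+\sigma_n^2$ and $\mu_i^- = M_i^-\vartheta+\sigma_n^2$. The classical identity for the minimum of two independent exponentials then gives
$$
\mathbb{P}\!\left(e_i^+ < e_i^- \,\middle|\, M_i^-\right)
= \frac{\mu_i^-}{\mu_i^+ + \mu_i^-}
= \frac{M_i^-\vartheta + \sigma_n^2}{M\vartheta + 2\sigma_n^2}.
$$
Dividing numerator and denominator by $\vartheta$ and introducing $\xi = \vartheta/\sigma_n^2$ reduces the conditional probability to an affine function of $M_i^-$, namely $(M_i^-+1/\xi)/(M+2/\xi)$.

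Next, I take expectation over $M_i^-$. By Assumption~5 the local sign votes across satellites are independent and each votes incorrectly with probability $q_i$, so $M_i^- \sim \mathrm{Binomial}(M, q_i)$ with $\mathbb{E}[M_i^-] = M q_i$. Linearity of expectation immediately yields $P_i^{\mathrm{err}} = (Mq_i + 1/\xi)/(M + 2/\xi)$; splitting this single fraction into its data-noise and channel-noise components gives the first inequality of the lemma, stated as $\leq$ to be consistent with the conservative worst-case-tail interpretation of the exponential model noted in the preceding remark.

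For the tighter bound, I would control $q_i$ via a Gauss-type inequality for unimodal symmetric distributions, as in Bernstein \emph{et al.} Under Assumption~4 each coordinate of the mini-batch noise $\tilde g_{m,i}^{(n)}-g_i^{(n)}$ is unimodal and symmetric about zero with standard deviation at most $\alpha_i/\sqrt{d_b}$; a sign flip requires the noise to exceed $|g_i^{(n)}|$ in magnitude, and the Gauss inequality therefore yields $q_i \leq \sqrt{2}\alpha_i/(3|g_i^{(n)}|\sqrt{d_b})$. Substituting this into the first inequality produces the final bound. The hardest step is this last one: Gauss's inequality is classically stated for distributions unimodal about their mode, and the resulting estimate is only non-trivial when $|g_i^{(n)}|$ is comparable to or larger than $\alpha_i/\sqrt{d_b}$, so some care is needed to handle the small-gradient regime via the trivial bound $q_i \leq 1/2$. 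The remaining steps then follow directly from the exponential model and linearity of expectation.
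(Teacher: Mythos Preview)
Your proposal is correct and follows essentially the same route as the paper: condition on the vote counts, use Assumption~6 and Lemma~1 to obtain the conditional error probability $\mu_i^-/(\mu_i^++\mu_i^-)=(M_i^-+1/\xi)/(M+2/\xi)$, then take expectation over the binomial count $M_i^-$ to get the first bound, and finally substitute the Gauss-inequality estimate of $q_i$ (the paper's Lemma~3) for the second bound. The only cosmetic difference is that the paper writes out the binomial sum explicitly and splits it into two terms $T_1,T_2$ before evaluating, whereas you invoke linearity of expectation directly.
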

where The coefficient $\emph{d}_{\emph{b}}$ is
that each local gradient estimate is computed over a mini batch of size $\emph{d}$ Also, the resultant gradient variance reduces from $\alpha^2$ to $\alpha^2/\emph{d}_{\emph{b}}$. 
\begin{proof}
    Please refer to Appendix \ref{app:l2}.
\end{proof}

\begin{lemma}[Failure Probability] 
Based on several previous assumptions in Section IV-A, the failure probability satisfies:
\begin{equation}\label{eq62}
\begin{aligned}
\emph{q}_{\emph{i}}  &=\mathbb{P}\left[\text{sign}\left(\tilde{\text{g}}_{\emph{m,i}}^{(\emph{n})}\right)\neq\text{sign}\left(\text{g}_{\emph{i}}^{(\emph{n})}\right)\right]\\
& \leq\left\{\begin{array}{ll}
\frac{2}{9} \frac{\alpha_{\emph{i}}^{2}}{\emph{d} _{\emph{b}}|\text{g}_{\emph{i}}^{(\emph{n})}|^{2}}, & \text { if } \frac{|\text{g}_{\emph{i}}^{(\emph{n})}|}{\alpha_{i} / \sqrt{\emph{d}_{\emph{b}}}}>\frac{2}{\sqrt{3}} \\
\frac{1}{2}-\frac{|\text{g}_{\emph{i}}^{(\emph{n})}|}{2 \sqrt{3} \alpha_{i} / \sqrt{\emph{d}_{\emph{b}}}}, & \text { otherwise, }
\end{array} \right.
\end{aligned}
\end{equation}
which is in all cases less than $1/2$. 
\label{lemma3}
\end{lemma}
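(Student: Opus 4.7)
The plan is to reduce the sign-error probability to a two-sided deviation probability of a centered, unimodal, symmetric random variable, and then invoke Gauss's inequality (the classical sharpening of Chebyshev's inequality for unimodal distributions). Fix an index $i$ and, without loss of generality, assume $g_{i}^{(n)}>0$. Introduce the centered gradient noise $S\triangleq \tilde g_{m,i}^{(n)}-g_{i}^{(n)}$. By Assumption~3, $\mathbb{E}[S]=0$ and $\mathrm{Var}(S)\le \alpha_{i}^{2}/d_{b}$; by Assumption~4, the distribution of $S$ is unimodal and symmetric about $0$. A sign error occurs iff $\tilde g_{m,i}^{(n)}\le 0$, equivalently $S\le -g_{i}^{(n)}$, and symmetry immediately gives
\begin{equation*}
q_{i} \;=\; \mathbb{P}\bigl[S\le -g_{i}^{(n)}\bigr] \;=\; \tfrac{1}{2}\,\mathbb{P}\bigl[|S|\ge g_{i}^{(n)}\bigr].
\end{equation*}

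Next I would apply Gauss's inequality to the unimodal symmetric $S$: writing $\tau_{S}^{2}\triangleq\mathbb{E}[S^{2}]$ and $\kappa\ge 0$,
\begin{equation*}
\mathbb{P}\bigl[|S|\ge \kappa\tau_{S}\bigr]\le
\begin{cases}
\dfrac{4}{9\kappa^{2}}, & \kappa>\tfrac{2}{\sqrt{3}},\\[4pt]
1-\dfrac{\kappa}{\sqrt{3}}, & 0\le\kappa\le\tfrac{2}{\sqrt{3}}.
\end{cases}
\end{equation*}
Setting $\kappa=g_{i}^{(n)}/\tau_{S}$, then using the variance upper bound $\tau_{S}\le \alpha_{i}/\sqrt{d_{b}}$ to enlarge the right-hand side, and finally multiplying by the symmetry factor $1/2$, I recover exactly the two branches in the claim: $\tfrac{2}{9}\alpha_{i}^{2}/(d_{b}|g_{i}^{(n)}|^{2})$ in the high-signal regime and $\tfrac{1}{2}-|g_{i}^{(n)}|/(2\sqrt{3}\alpha_{i}/\sqrt{d_{b}})$ in the low-signal regime, with the case split occurring at $|g_{i}^{(n)}|/(\alpha_{i}/\sqrt{d_{b}})=2/\sqrt{3}$.

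To close the argument with $q_{i}<1/2$ in all cases, I would note that the high-signal bound evaluates to $\tfrac{2}{9}\cdot\tfrac{3}{4}=1/6$ at the threshold and decreases in $|g_{i}^{(n)}|$, while the low-signal expression is strictly less than $1/2$ whenever $|g_{i}^{(n)}|>0$. The main subtlety I expect is that Gauss's inequality is phrased in terms of the exact second moment $\tau_{S}^{2}$ rather than the bound $\alpha_{i}^{2}/d_{b}$: I would address this by observing that, at a fixed deviation $|g_{i}^{(n)}|$, both branches of the Gauss bound, namely $4\tau_{S}^{2}/(9|g_{i}^{(n)}|^{2})$ and $1-|g_{i}^{(n)}|/(\tau_{S}\sqrt{3})$, are monotone non-decreasing in $\tau_{S}$, so replacing $\tau_{S}$ by the larger quantity $\alpha_{i}/\sqrt{d_{b}}$ only weakens the bound; a brief sub-case check further verifies that when the true $\tau_{S}$ puts $g_{i}^{(n)}/\tau_{S}$ into the high-signal branch but $g_{i}^{(n)}/(\alpha_{i}/\sqrt{d_{b}})$ into the low-signal branch, the tighter $4\tau_{S}^{2}/(9|g_{i}^{(n)}|^{2})$ already lies below the low-signal expression, so the claim remains valid. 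The rest is routine rearrangement of the constants.
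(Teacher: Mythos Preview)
Your proposal is correct and follows essentially the same route as the paper's proof: reduce the sign-error event to a one-sided tail, use symmetry to rewrite it as $\tfrac{1}{2}\mathbb{P}[|S|\ge |g_i^{(n)}|]$, and then apply Gauss's inequality for unimodal symmetric distributions. In fact you are more careful than the paper, which silently treats $\alpha_i^2/d_b$ as the exact variance in Gauss's inequality, whereas you correctly note that it is only an upper bound and verify the requisite monotonicity and cross-regime comparison.
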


\begin{proof}
    Please refer to Appendix \ref{app:l3}.
\end{proof}

Lemma \ref{lemma3} implies the following results:
\begin{corollary}[Legitimate Space Nodes]
For $\emph{q}_{\emph{i}} < \emph{p}_{\emph{i}}$, $\emph{M}_{i}^{+}$ must be larger than $\emph{M}/2$, meanwhile satisfy $\emph{P}_\emph{i}^{\text{err}} < 1/2$.
\end{corollary}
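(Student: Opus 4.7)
The plan is to derive the corollary directly from the two preceding lemmas; essentially no fresh probabilistic argument is required, only algebraic bookkeeping. First, I would observe that the hypothesis $q_i < p_i$, combined with $p_i+q_i=1$, is equivalent to $q_i<1/2$. \lemmaref{lemma3} in fact establishes this unconditionally: in the high-SNR branch $|g_i^{(n)}|/(\alpha_i/\sqrt{d_b})>2/\sqrt{3}$, the bound $(2/9)\alpha_i^2/(d_b|g_i^{(n)}|^2)$ is already below $1/2$ because the squared ratio in the denominator exceeds $4/3$; in the complementary branch, $1/2-|g_i^{(n)}|/(2\sqrt{3}\alpha_i/\sqrt{d_b})$ is strictly less than $1/2$ whenever $g_i^{(n)}\neq 0$. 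So I would open the proof by invoking \lemmaref{lemma3} to assert $q_i<1/2$, and consequently $p_i>1/2$.

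For the first claim $M_i^+>M/2$, I would interpret the inequality in expectation, which is the natural reading under the i.i.d.~voting model of Section~IV-C. Writing
\begin{equation}
M_i^+ \;=\; \sum_{m=1}^{M}\mathbb{I}\!\left[\mathrm{sign}(\tilde{g}_{m,i}^{(n)})=\mathrm{sign}(g_i^{(n)})\right]
\end{equation}
as a sum of $M$ independent $\mathrm{Bernoulli}(p_i)$ indicators, linearity of expectation immediately yields $\mathbb{E}[M_i^+]=Mp_i>M/2$ whenever $p_i>1/2$. If a high-probability statement is preferred, a one-sided Chernoff/Hoeffding bound gives $\mathbb{P}(M_i^+>M/2)\to 1$ at a rate exponential in $M$, but this sharpening is not required by the statement of the corollary.

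For the second claim $P_i^{\mathrm{err}}<1/2$, I would substitute $q_i<1/2$ into the closed-form upper bound from Lemma~2, namely
\begin{equation}
P_i^{\mathrm{err}} \;\le\; \frac{Mq_i + 1/\xi}{M + 2/\xi}.
\end{equation}
Testing whether this upper bound itself is strictly below $1/2$ reduces, after clearing the positive denominator $M+2/\xi$, to the linear inequality $2Mq_i<M$, i.e., $q_i<1/2$. Since this is exactly the conclusion of \lemmaref{lemma3}, the bound $P_i^{\mathrm{err}}<1/2$ follows immediately.

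The proof is thus almost entirely algebraic, and I do not expect any substantive obstacle. The only delicate point — which I would flag in a single sentence — is notational: one must distinguish the random count $M_i^+$ from its expectation $Mp_i$, and ensure that the strict inequality in the ``otherwise'' branch of \lemmaref{lemma3} is genuinely strict, which requires the mild non-degeneracy $g_i^{(n)}\neq 0$. Once these points are made explicit, the corollary is a one-line consequence of \lemmaref{lemma3} and the error-probability bound in Lemma~2.
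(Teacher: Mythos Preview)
Your proposal is correct and follows essentially the same approach as the paper, which in fact gives no explicit proof: the corollary is simply announced as an immediate consequence of \lemmaref{lemma3}. You have filled in exactly the details the paper leaves implicit --- invoking \lemmaref{lemma3} for $q_i<1/2$, reading $M_i^+>M/2$ in expectation via $\mathbb{E}[M_i^+]=Mp_i$, and plugging $q_i<1/2$ into the closed-form error expression from Lemma~2 (which is in fact an equality under Assumption~6 in the paper's derivation, so your inequality check is tight).
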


\subsection{Convergence Rate over Fading Channel}

The proposed OptiVote scheme incorporates probabilistic aggregation errors caused by FSO channel impairments, which may deviate from the global update on true MVs. Specifically, by incorporating specific FSO channel statistics, pointing errors, etc., the convergence rate is derived as follows:
\begin{theorem}[Convergence Rate]
Consider an FL system based on the proposed scheme, for the mini-batch size $\emph{d}_{\emph{b}}=N / \gamma$ and the learning rate $\eta=1 / \sqrt{\|\mathbf{L}\|_{1} \emph{d}_{\emph{b}}}$, the convergence rate in fading channel is given by
\begin{equation} 
\begin{split}
\mathbb{E}\left[\frac{1}{\emph{N}} \sum_{\emph{n}=0}^{\emph{N-1}}\left\|\mathbf{g}^{\emph{(n)}}\right\|_{1}\right] \leq \frac{1}{\sqrt{\emph{N}}}&\Big(\delta \sqrt{\|\mathbf{L}\|_{1}}\left(F(\mathbf{w}^{(0)})-\emph{F}^{*}+\frac{\gamma}{2}\right)\\&+\frac{2 \sqrt{2}}{3} \sqrt{\gamma}\|\boldsymbol{\sigma}\|_{1}\Big), 
\end{split}
\end{equation}
\end{theorem}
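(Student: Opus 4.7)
The plan is to adapt the classical signSGD-with-majority-vote convergence argument to accommodate the non-coherent FSO aggregation errors characterized in Lemma~2, using the smoothness assumption as the per-round descent backbone and then telescoping. First, I would apply Assumption~2 with $\mathbf{w}'=\mathbf{w}^{(n+1)}=\mathbf{w}^{(n)}-\eta\hat{\mathbf{v}}^{(n)}$ to obtain
\begin{equation*}
F(\mathbf{w}^{(n+1)}) \leq F(\mathbf{w}^{(n)}) - \eta\sum_{i=1}^{q} g_i^{(n)}\hat v_i^{(n)} + \tfrac{\eta^2}{2}\|\mathbf{L}\|_1,
\end{equation*}
where the quadratic term collapses because $(\hat v_i^{(n)})^2=1$. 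Taking conditional expectation over both data and channel randomness coordinatewise, and using the fact that $\hat v_i^{(n)}=\operatorname{sign}(g_i^{(n)})$ with probability $1-P_i^{\mathrm{err}}$ and flips sign otherwise, gives $\mathbb{E}[g_i^{(n)}\hat v_i^{(n)}] = |g_i^{(n)}|(1-2P_i^{\mathrm{err}})$.

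Next, I would substitute the bound from Lemma~2, namely $P_i^{\mathrm{err}} \leq \frac{M\sqrt{2}\alpha_i/(3|g_i^{(n)}|\sqrt{d_b})}{M+2/\xi} + \frac{1/\xi}{M+2/\xi}$. The pivotal algebraic observation is that multiplying the first summand by $|g_i^{(n)}|$ exactly cancels the gradient magnitude in its denominator, yielding a pure data-noise penalty proportional to $\alpha_i/\sqrt{d_b}$, while the second (channel-noise) summand scales the surviving $\|\mathbf{g}^{(n)}\|_1$ by $1/\delta$, where $\delta\geq 1$ plays the role of the coherence factor $\delta \triangleq (M+2/\xi)/(M-2/\xi)$. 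This yields
\begin{equation*}
\mathbb{E}[F(\mathbf{w}^{(n+1)})] - F(\mathbf{w}^{(n)}) \leq -\tfrac{\eta}{\delta}\|\mathbf{g}^{(n)}\|_1 + \tfrac{2\sqrt{2}\,\eta}{3\sqrt{d_b}}\|\boldsymbol{\sigma}\|_1 + \tfrac{\eta^2}{2}\|\mathbf{L}\|_1,
\end{equation*}
where the constant $2\sqrt{2}/3$ arises from doubling the $\sqrt{2}/3$ factor of Lemma~3 after the sign-flip accounting.

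Finally, I would telescope this inequality across $n=0,\ldots,N-1$, invoke Assumption~1 to replace $\mathbb{E}[F(\mathbf{w}^{(N)})]$ by the lower bound $F^*$, divide by $\eta N/\delta$ to isolate $\tfrac{1}{N}\sum_n\mathbb{E}\|\mathbf{g}^{(n)}\|_1$, and then substitute the prescribed choices $d_b=N/\gamma$ and $\eta=1/\sqrt{\|\mathbf{L}\|_1 d_b}$. The initial gap term produces $\delta\sqrt{\|\mathbf{L}\|_1}(F(\mathbf{w}^{(0)})-F^*)/\sqrt{N}$ (absorbing the $\gamma$ into the noise-term scaling), the smoothness penalty yields the $\delta\sqrt{\|\mathbf{L}\|_1}\cdot\gamma/2/\sqrt{N}$ contribution inside the parenthesis, and the data-noise term sharpens to $\tfrac{2\sqrt{2}}{3}\sqrt{\gamma}\|\boldsymbol{\sigma}\|_1/\sqrt{N}$.

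The main obstacle will be the clean handling of $\delta$. I would need to verify that $1-2P_i^{\mathrm{err}}\geq 1/\delta$ holds uniformly over coordinates and iterations, which combines Corollary~1 (guaranteeing $M_i^+>M/2$, hence $q_i<1/2$ for legitimate space nodes) with a monotonicity argument on the rational map $\xi\mapsto (1/\xi)/(M+2/\xi)$ to certify that the channel-noise contribution does not erase the signal margin. A secondary bookkeeping challenge will be reconciling the Gaussian-moment constant $\sqrt{2}/3$ from Lemma~3 with the $2\sqrt{2}/3$ factor in the final display so that no slack is lost when Assumption~4 (unimodal symmetric gradient noise) is invoked to justify the $q_i$ bound in the regime $|g_i^{(n)}|/(\alpha_i/\sqrt{d_b})>2/\sqrt{3}$, which is the branch relevant to the $1/\sqrt{N}$ regime.
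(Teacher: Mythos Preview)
Your proposal follows essentially the same route as the paper: apply the coordinatewise smoothness descent to $\mathbf{w}^{(n+1)}=\mathbf{w}^{(n)}-\eta\hat{\mathbf v}^{(n)}$, take expectation to expose $P_i^{\mathrm{err}}$, insert the Lemma~2 bound so that $|g_i^{(n)}|$ cancels against the data-noise part and the channel-noise part merely rescales $\|\mathbf g^{(n)}\|_1$, then telescope, invoke Assumption~1, and substitute $d_b=N/\gamma$, $\eta=1/\sqrt{\|\mathbf L\|_1 d_b}$.

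Two small repairs are needed. First, your algebra for the channel factor is off: doubling the channel-noise summand gives
\[
1-2\cdot\frac{1/\xi}{M+2/\xi}=\frac{M}{M+2/\xi},
\]
so the multiplier you call $\delta$ should be $(M+2/\xi)/M=1+2/(\xi M)$, not $(M+2/\xi)/(M-2/\xi)$; combined with the $1/\sqrt{\gamma}$ that appears after substituting $\eta$, this is exactly the paper's $\delta=(1+2/(\xi M))\,\gamma^{-1/2}$. Second, the ``main obstacle'' you anticipate---a uniform lower bound $1-2P_i^{\mathrm{err}}\ge 1/\delta$ via Corollary~1 and a monotonicity argument---is unnecessary. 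The direct substitution you already describe in your second paragraph handles everything: the data-noise piece contributes $\frac{2\sqrt{2}}{3\sqrt{d_b}}\|\boldsymbol\sigma\|_1$ after the $|g_i^{(n)}|$ cancellation, and the channel-noise piece simply leaves $\frac{M}{M+2/\xi}\|\mathbf g^{(n)}\|_1$ on the left side, with no separate uniformity check required.
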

where $\gamma$ is a positive integer, $\delta\!=\!\left(1+\frac{2}{\xi \emph{M}}\right) \frac{1}{\sqrt{\gamma}}$ for $\xi \triangleq {\vartheta}/{\sigma_{\mathrm{n}}^{2}}$. $\vartheta$ characterizes the effective signal energy and evolves as a bounded variable, thereby guaranteeing a theoretical upper bound on the convergence performance.

\begin{proof}
    Please refer to Appendix \ref{app:T1}.
\end{proof}

\begin{remark}
\emph{Based on the derived convergence bound and the parameter definitions in Theorem 1, we can infer the following insights regarding the impact of system parameters:
\begin{itemize}
\item Impact of Scale and SNR: Regarding a larger effective SNR (i.e., a larger $\xi \propto \vartheta/\sigma_{n}^{2}$) and a massive number of participating satellites (i.e., a larger $M$), the convergence error term $\delta$ decreases explicitly. This indicates that expanding the constellation scale or improving the link budget directly accelerates the global model convergence.
\item Physical Channel Constraints: $\lambda$ related to geometric path loss and stochastic pointing errors act as attenuation factors on effective signal energy $\vartheta$. Specifically, a severe beam misalignment or a sparse spatial topology leads to a smaller $\lambda$, which lowers convergence rate by reducing the magnitude of the aggregated gradient signal. 
\item Role of Adaptive Power Control: The proposed adaptive scheme accounts for a more robust convergence performance compared to fixed power schemes. Since the transmit power is strictly constrained within $[P_{\min}, P_{\max}]$, $\vartheta$ evolves as a bounded variable, thereby guaranteeing that the convergence upper bound remains theoretically stable and finite. 
\end{itemize}
}
\end{remark}

Note that the proposed OptiVote scheme eliminates the stringent requirement for phase synchronization, thereby making the aggregation robust against phase jitter and mitigating the impact of amplitude fading effectively, thereby ensuring stable convergence behavior.

\section{Simulation Results} 

In this section,we conduct comprehensive experiments to compare the proposed Optivote algorithm with baseline schemes for distributed learning to examine its effectiveness.

\subsection{Experimental Setup and Dataset }

\emph{1) Scenario Setting:} For our simulations, we consider a typical distributed learning network consisting of one aggregation satellite and $\emph{M}$ space nodes. In each communication round, the AS randomly selects $\emph{m}$ active space nodes to participate in the distributed training procedure. The selected nodes perform local learning based on their own datasets and periodically upload the local gradients to the AS over FSO links for aggregation, while the AS broadcasts the updated global model to the nodes in the next round. 

\begin{figure}
        \centering     \includegraphics[width=1\linewidth]{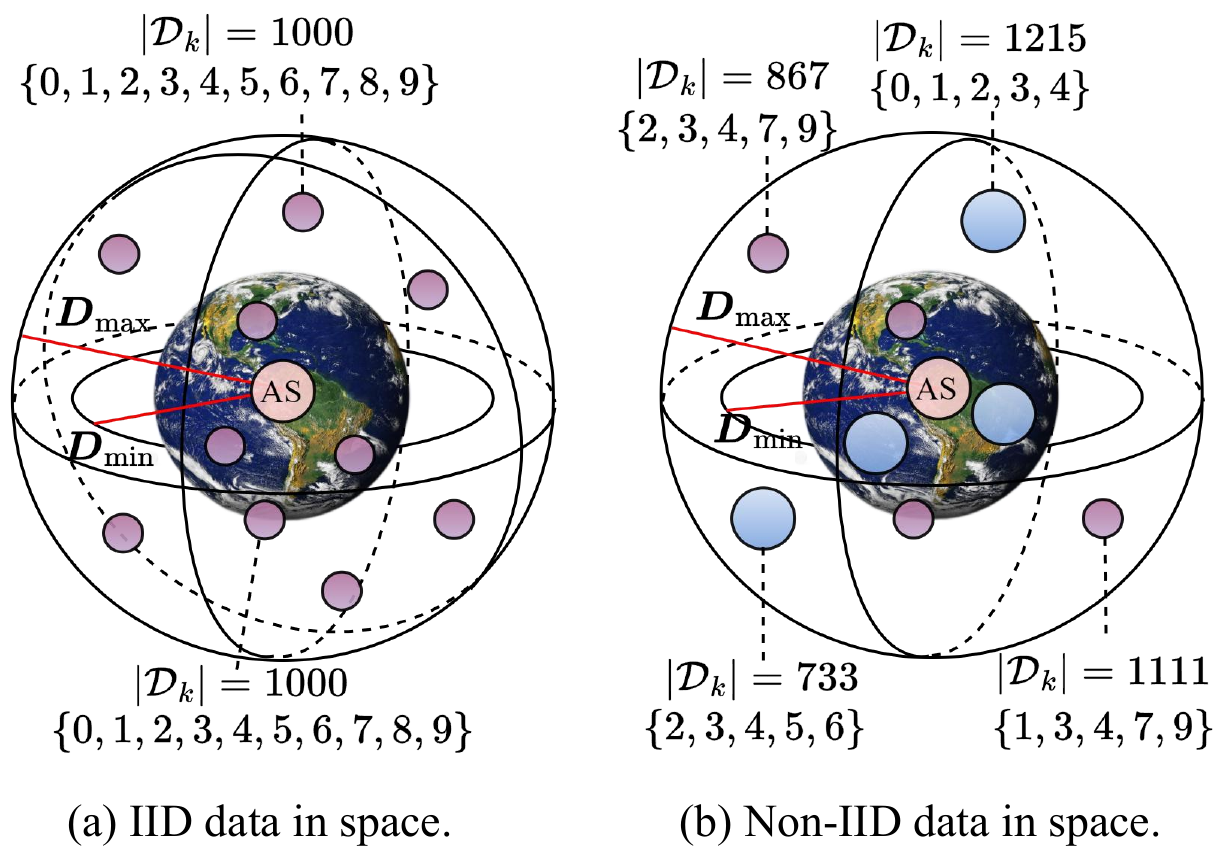}
        \caption{IID versus non-IID data considered for the detailed numerical analyses. Satellite nodes are randomly distributed inside the sphere, ranging $500\sim2000$ KM. (a): All space nodes have the same data samples for 10 different digits on their locations. (b): The available digits at the space nodes change based on their locations in space.}
        \label{fig3}
\end{figure}

\emph{2) Topology and Inter-satellite FSO Channel Model:}
The $M$ space nodes are randomly deployed in a three-dimensional spherical region centered at the aggregation satellite. 
Specifically, the link distance between node $m$ and the AS, denoted by $d_k$, is independently drawn from a bounded range $[d_{\min}, d_{\max}]$, where $d_{\min}$ = $500$ km and $d_{\max}=2000$ km. 
Moreover, we focus on inter-satellite optical links, where atmospheric turbulence is negligible. 
Therefore, the FSO channel impairments are modeled by only geometric path loss and pointing error.

\emph{3) Dataset and Neural Network Architecture}
To evaluate performance of the proposed Algorithm, we simulate the image classification task on MNIST and CIFAR-10 data sets. FL performance is evaluated by test accuracy, i.e., the number of correctly classified test images to the size of the test set ratio. During training, we apply standard data augmentation techniques, including random cropping and horizontal flipping, to enhance generalization. Based on CIFAR-10 and Tiny-ImageNet, we design the different model backbones.

\emph{3) Dataset and Neural Network Architecture:} To evaluate performance of the proposed Algorithm, we simulate the image classification task on MNIST and CIFAR-10 data sets. FL performance is evaluated by test accuracy, i.e., the number of correctly classified test images to the size of the test set ratio. During training, we apply standard data augmentation techniques, including random cropping and horizontal flipping, to enhance generalization. Based on CIFAR-10 and Tiny-ImageNet, we design the different model backbones.

\emph{4) Compared Methods:} We evaluate OptiVote against other state-of-the-art baselines, described as follows:
\begin{itemize}
\item \textbf{OBDA without TCI:} This scheme uses distributed learning with AirComp framework in \cite{9272666}, but disables the timing/synchronization compensation module, i.e., the transmission coefficient information (TCI). 
\item \textbf{OBDA with TCI:} This method \cite{9272666} incorporates TCI to mitigate the impact of timing synchronization mismatch in over-the-air aggregation. We adopt the same training protocol and denote this baseline as OBDA with TCI.
\item \textbf{FSK-MV:} This scheme applies FSK-based signaling to aggregate the one-bit signs from multiple clients, offering enhanced robustness against channel impairments \cite{9641940}.
\item \textbf{FedAvg-AirComp:} This baseline operates the standard FedAvg and combines AirComp strategy to process the upward transmission of model parameters.
\end{itemize}

To ensure fair comparisons, we standardize the experimental environment across all methods. Furthermore, all models share an identical neural network backbone, subject to equal constraints on computational complexity and memory usage, simulating deployment on resource-limited edge devices.

\subsection{Experimental Results}

In the experiments, we evaluate the learning performance in a typical AS and multiple space nodes distributed learning network. In each communication round, the AS randomly selects $\emph{m}=4$ active nodes from a population of 
$\emph{M}=100$ nodes. Each selected node performs local training on its private dataset for $\emph{E}=5$ local epochs with batch size $\emph{B}=64$, and then uploads its update to the AS through an inter-satellite FSO uplink for aggregation. The AS broadcasts the updated global model to all nodes for the next round.

\emph{1) Task Performance on Distributed Network:} In Fig.~\ref{fig4a} and Fig.~\ref{fig4b}, we report the test accuracy under MNIST both IID and Non-IID data partitions in the considered AS-space nodes FL network with $\emph{M}=100$ nodes and $\emph{m}=4$ active nodes per round. Specifically, under IID data, the one-bit/voting-based schemes rise extremely fast and quickly enter a high-accuracy regime. Note that OptiVote shows the steepest early-stage climb and reaches the top level within roughly the first few tens of communication rounds, after which it remains highly stable and achieves the best steady-state performance. 
By adaptively allocating transmit power according to the importance of local updates, the AS in our scheme can effectively enlarge the vote margin at the receiver and suppress the detrimental influence of weak or misaligned contributions, which reduces sign decision errors especially near convergence.
Consistent with this, other baseline (such as FSK-MV) closely tracks the proposed scheme but converges to a slightly lower region. Also, the inset shows a small yet persistent gap, suggesting that the proposed power control not only accelerates convergence but also improves the final-stage reliability. OBDA with TCI and OBMA without TCI also reach the high-accuracy region in the IID case. However, the TCI-enabled variant tends to be smoother and marginally higher in the late stage, indicating that when egde gradients are well aligned, synchronization compensation mainly reduces residual decoding errors rather than changing the overall convergence trend. In contrast, FedAvg-AirComp exhibits fundamentally different behavior. Although IID data typically favors FedAvg, analog superposition of real-valued updates over the FSO uplink introduces aggregation distortion and bias that can prevent effective learning precedure.

\begin{figure*}[t]
        \centering
        \subfigure[Task performance on IID case in MNIST dataset.]{{\label{fig4a}}\includegraphics[width=0.49\linewidth]{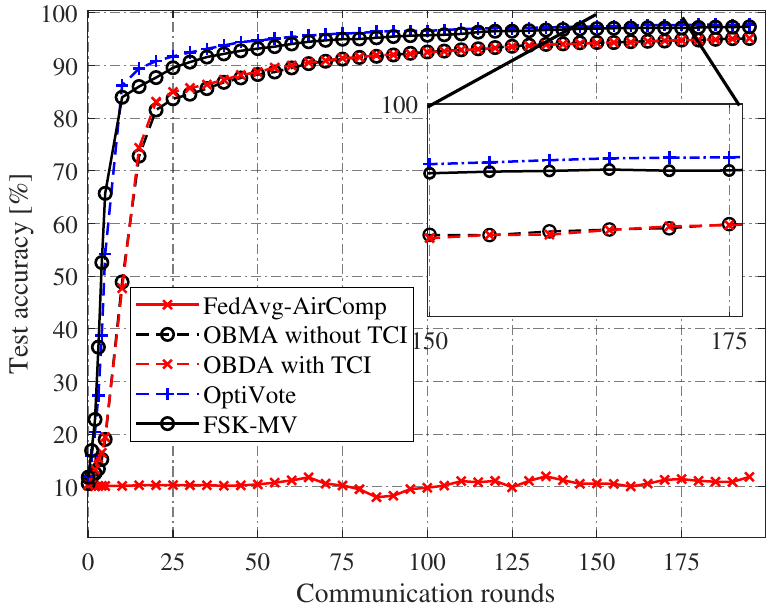}}
        \subfigure[Task performance on Non-IID case in MNIST dataset.]{{\label{fig4b}}\includegraphics[width=0.49\linewidth]{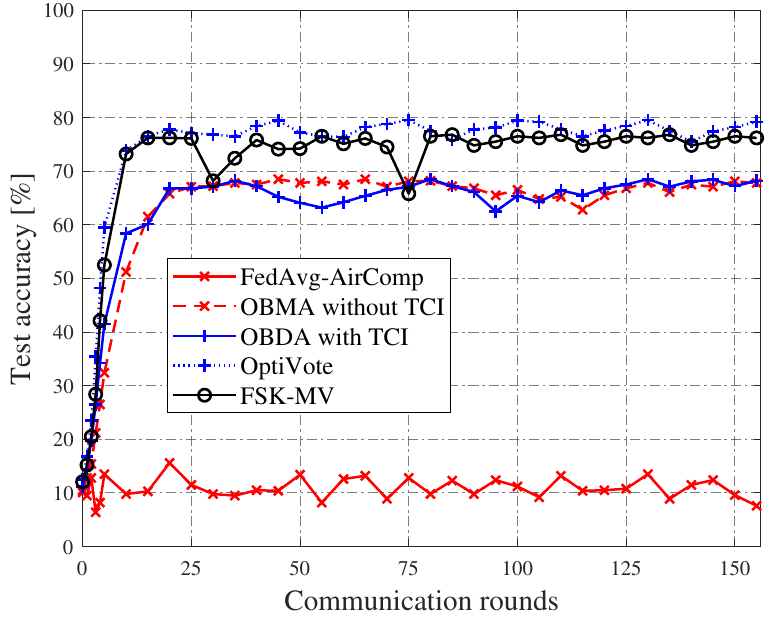}}
        \caption{MNIST test accuracy versus communication rounds in the considered AS-space-nodes FL network over inter-satellite FSO uplinks. We compare FedAvg-AirComp, OBMA without TCI, OBDA with TCI, OptiVote, and FSK-MV under (a) IID and (b) Non-IID data partitions.}
        \label{fig4}
\end{figure*}

\begin{figure*}[t]
        \centering
        \subfigure[Model loss on IID case in MNIST dataest.]{{\label{fig5a}}\includegraphics[width=0.49\linewidth]{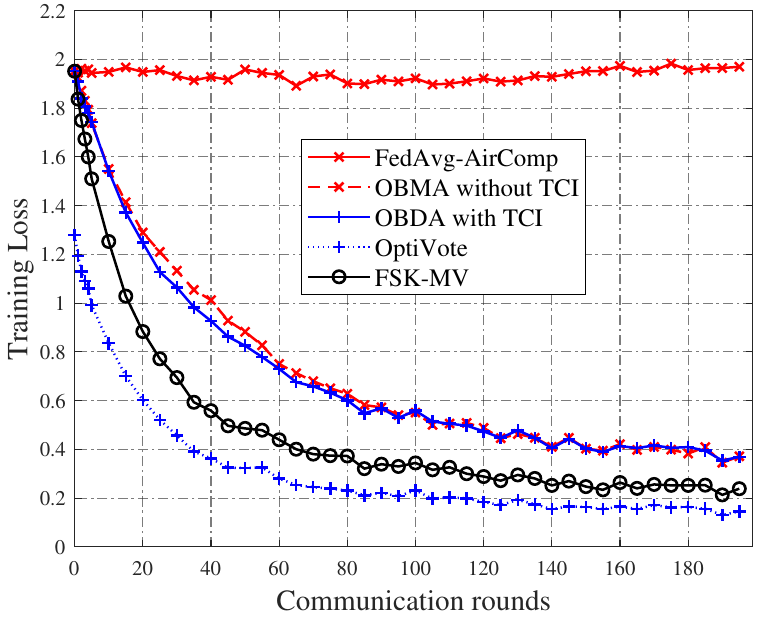}}
        \subfigure[Model loss on Non-IID case in MNIST dataest.]{{\label{fig5b}}\includegraphics[width=0.49\linewidth]{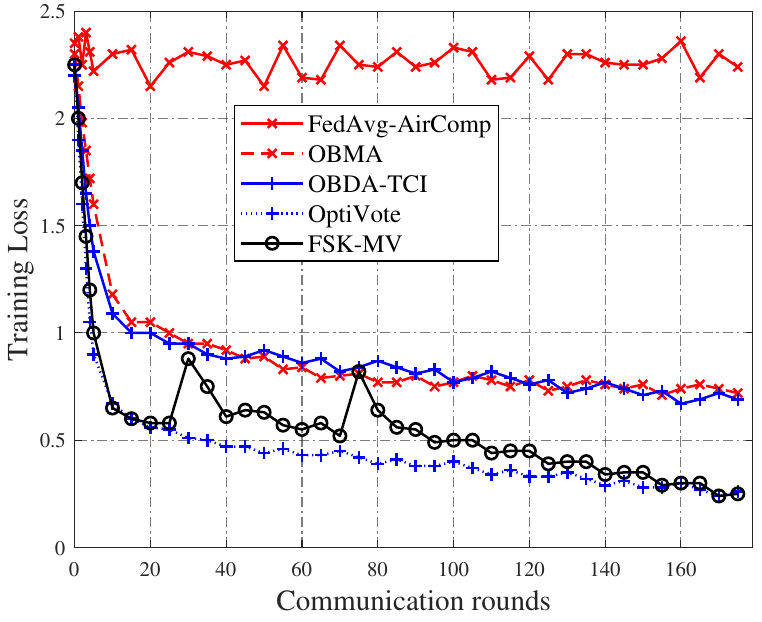}}
        \caption{MNIST training loss versus communication rounds under the same experimental setting as Fig. 4. We report the convergence behavior of FedAvg-AirComp, OBMA without TCI, OBDA with TCI, OptiVote, and FSK-MV for (a) IID and (b) Non-IID data partitions, illustrating both the convergence speed and the stability of the learning process over inter-satellite FSO aggregation.}
        \label{fig5}
\end{figure*}

Regarding the Non-IID setting in Fig.~\ref{fig4b}, all methods display a slower and more fluctuating trajectory than in the IID case, because each round’s randomly selected $m=4$ nodes may carry highly unbalanced label distributions, leading to stronger gradient disagreement. Even in this more challenging set, the proposed scheme remains robust and consistently stays in the highest accuracy band with relatively small oscillations, while FSK-MV remains the second-best curve but experiences slightly deeper dips in some rounds. Notably, OBDA with TCI generally shows fewer and shallower drops than its non-TCI counterpart, consistent with the fact that synchronization-induced distortion is more likely to cause decoding errors under heterogeneous updates. Meanwhile, FedAvg-AirComp continues to hover near the chance level with irregular small spikes, implying that the combination of biased participation and analog aggregation distortion makes the aggregated update highly unreliable. 
Overall, these results demonstrate that OptiVote achieves the best trade-off among fast convergence, high steady-state accuracy, and robustness to data heterogeneity, largely due to the importance-aware power control that enhances decision reliability without relying on explicit CSI.

\emph{2) Realistic Convergence Analysis:} In the IID case of Fig.~\ref{fig5a}, the proposed schemes exhibit clear and stable loss decay. OptiVote drops from the initial $2$ level to below $0.3$ within roughly the first few tens of communication rounds and further converges to the lowest loss floor about $0.15$. FSK-MV follows closely but stabilizes at a slightly higher floor (roughly $0.2\sim0.25$). However, OBDA-TCI and OBMA without TCI converge slower and remain higher around $0.35\sim0.45$. In contrast, FedAvg-AirComp essentially fails to optimize, and the training loss stays around $2.0$ throughout training, indicating that the analog superposition over the FSO uplink introduces a persistent aggregation distortion/bias that overwhelms the true descent direction. Further, in the Non-IID case on Fig.~\ref{fig5a}, the trajectories become more oscillatory and the steady-state loss is higher, which indicates that data heterogeneity increases gradient disagreement and makes the vote margin more fragile. Still, FedAvg-AirComp stays near $2.2\sim2.4$, confirming its strong sensitivity to analog aggregation distortion. Notably, OptiVote maintains the most stable low-loss band (about $0.35\sim0.45$ after convergence), while OBDA/OBMA plateau higher (around $0.7\sim0.9$), and FSK-MV shows occasional spikes, i.e., a typical signature of bursty decoding errors when the margin is small. These observations directly support our main schem, i.e., the importance-aware dynamic power control increases the effective receive margin for influential contributors, thereby reducing the dominant vote-error probability term, which translates into faster convergence and a lower loss floor, especially under Non-IID where errors are more likely to flip marginal votes.


\section{Conclusion}
This paper targets communication-efficient and robust federated learning over inter-satellite FSO uplinks, where coherent AirComp is hard to sustain due to phase-related impairments. We proposed OptiVote, a non-coherent FSO AirComp framework that integrates signSGD with majority-vote aggregation and PPM-based signaling, enabling simultaneous uplink aggregation via energy accumulation without requiring precise phase synchronization. To further combat aggregation bias caused by heterogeneous FSO links, we developed an importance-aware, CSI-free dynamic power control mechanism that balances the received energies without extra signaling overhead. We complemented the design with theoretical characterization of the aggregate decision error probability under statistical FSO channels and established convergence guarantees for non-convex objectives under non-coherent MV aggregation with communication impairments. Extensive experiments under both IID and Non-IID partitions demonstrated that OptiVote consistently improves learning accuracy and communication efficiency over representative baselines, supporting scalable and resilient in-orbit intelligence for future communication-constrained space data centers.

\begin{appendices}

\section{Proof of Lemma 1}\label{app:l1}

Consider the received energy statistic contributed by space node $m$ at round $n$,
\begin{equation}
e_m^{(n)} \triangleq C_R P_m^{(n)} I_m^{(n)} \sqrt{E_s} + \epsilon_m^{(n)},
\label{eq:em_def_rewrite}
\end{equation}
where the concrete noise contribution to the expected energy statistic is quantified by its average power, i.e., $\sigma_n^2$). 
Taking the expectation yields
\begin{equation}
\mathbb{E}\!\left[e_m^{(n)}\right] = C_R\sqrt{E_s}\,\mathbb{E}\!\left[P_m^{(n)} I_m^{(n)}\right]+ \sigma_n^2.
\end{equation}
Since $P_m^{(n)}$ is updated by the proposed importance-aware scheme, while the channel gain $I_m^{(n)}$ is determined by the constellation topology and the platform pointing jitter, we adopt the standard independence assumption $P_m^{(n)}\perp I_m^{(n)}$, which gives
\begin{equation}
\mathbb{E}\!\left[P_m^{(n)} I_m^{(n)}\right]
=\mathbb{E}\!\left[P_m^{(n)}\right]\mathbb{E}\!\left[I_m^{(n)}\right] =P_{\text{avg}}\lambda,
\end{equation}
where $\bar{P}\triangleq\mathbb{E}[P_m^{(n)}]$ and $\lambda\triangleq\mathbb{E}[I_m^{(n)}]$. Thus, the effective average received energy per satellite is
\begin{equation}
\vartheta \triangleq \mathbb{E}[C_R P_m^{(n)} I_m^{(n)} \sqrt{E_s}]= C_R\sqrt{E_s}\,P_{\text{avg}}\lambda,
\end{equation}
which proves Eq.(26).

Let $\mathcal{M}_i^+$ (resp. $\mathcal{M}_i^-$) denote the set of satellites voting for $+1$ (resp. $-1$), with $|\mathcal{M}_i^+|=M_i^+$ and $|\mathcal{M}_i^-|=M_i^-$. The accumulated energies at the aggregation station are defined as
\begin{equation}
e_i^+ \triangleq \sum_{m\in\mathcal{M}_i^+} e_m^{(n)},\qquad
e_i^- \triangleq \sum_{m\in\mathcal{M}_i^-} e_m^{(n)}.
\end{equation}
By the linearity of expectation, we obtain
\begin{equation}
\mu_i^+ \triangleq \mathbb{E}[e_i^+]=\sum_{m\in\mathcal{M}_i^+}\mathbb{E}[e_m^{(n)}]=M_i^+\vartheta + \sigma_n^2,
\end{equation}
and
\begin{equation}
\mu_i^- \triangleq \mathbb{E}[e_i^-]=\sum_{m\in\mathcal{M}_i^-}\mathbb{E}[e_m^{(n)}]=M_i^-\vartheta + \sigma_n^2.
\end{equation}
respectively. Then, we offer the specific proof on $\lambda=\mathbb{E}[I_m^{(n)}]$. Under the channel decomposition $I_m^{(n)}=h_{l,m}h_{p,m}$ and the independence between geometric path loss and pointing jitter, we have $\lambda=\mathbb{E}[h_l]\mathbb{E}[h_p]$. For satellites uniformly distributed in a 3D spherical shell with $D\in[D_{\min},D_{\max}]$, the specific PDF is
\begin{equation}
f_D(d)=\frac{3d^2}{D_{\max}^3-D_{\min}^3},\quad d\in[D_{\min},D_{\max}],
\end{equation}
and with $h_l=C_{\text{FSPL}}/D^2$, the geometric efficiency admits the closed-form
\begin{equation}
\begin{aligned}
\mathbb{E}[h_l]
=&\int_{D_{\min}}^{D_{\max}} \frac{C_{\text{FSPL}}}{d^2}\cdot \frac{3d^2}{D_{\max}^3-D_{\min}^3}\,\mathrm{d}d
\\=&\frac{3C_{\text{FSPL}}(D_{\max}-D_{\min})}{D_{\max}^3-D_{\min}^3}.    
\end{aligned}
\end{equation}
where $C_{\text{FSPL}} \triangleq (\lambda_{\text{opt}}/{4\pi})^2$ is the free-space path loss constant determined by the optical carrier wavelength $\lambda_{\text{opt}}$. $D_{\min}$ and $D_{\max}$ denote the minimum and maximum inter-satellite distances of the spherical cluster, respectively. Regarding the pointing errors, $A_0$ represents the maximum fraction of collected power at zero radial displacement. 

Note that these terms are fixed system constants determined by the satellite orbit design and optical transceiver hardware. Moreover, for the standard zero-boresight jitter model, the pointing loss $h_p\in(0,A_0]$ follows
\begin{equation}
f_{h_p}(h)=\frac{\xi^2}{A_0^{\xi^2}}h^{\xi^2-1},\quad 0<h\le A_0,
\end{equation}
which yields
\begin{equation}
\mathbb{E}[h_p]
=\int_{0}^{A_0} h\,f_{h_p}(h)\,\mathrm{d}h
=A_0\frac{\xi^2}{\xi^2+1}.
\end{equation}
Combining the above results gives $\lambda=\mathbb{E}[h_l]\mathbb{E}[h_p]$ in Eq.(\ref{eq27}). The proof is complete.

\section{Proof of Lemma 2}\label{app:l2}

Regarding this bounds on stochasticity-induced error, we mainly deal with the term of $P_{i}^{\mathrm{err}}$. Through Section IV-C, the following treatment is available for Eq.(\ref{eq39}), for all $m$. This implies that
\begin{equation}\label{eq53}
\emph{P}_\emph{i}^{\text{err}}=\sum_{\emph{M}_{i}^{+}=0}^{\emph{M}} \mathbb{P}\left[\text{sign}\left(\Delta_{\emph{i}}^{(\emph{n})}\right) \neq 1 \mid \emph{Z}=\emph{M}_{\emph{i}}^{\textbf{\texttt{+}}}\right]\mathbb{P}\left[\emph{Z}=\emph{M}_{\emph{i}}^{\textbf{\texttt{+}}}\right].
\end{equation}

According to the properties of Bernoulli distribution, the second term on Eq.(\ref{eq53}) can be expressed as follow:
\begin{equation}
\mathbb{P}\left[\emph{Z}=\emph{M}_{\emph{i}}^{\textbf{\texttt{+}}}\right]=\left(\begin{array}{c}
\emph{M} \\\emph{M}_{\emph{i}}^{\textbf{\texttt{+}}}\end{array}\right) \emph{p}_\emph{i}^{\emph{M}_{\emph{i}}^{\textbf{\texttt{+}}}} \emph{q}_\emph{i}^{\emph{M}\texttt{-}\emph{M}_{\emph{i}}^{\textbf{\texttt{+}}}}.
\end{equation}

To calculate $\mathbb{P}[\text{sign}(\Delta_{\emph{i}}^{(\emph{n})}) \neq 1 \mid \emph{Z}=\emph{M}_{\emph{i}}^{\textbf{\texttt{+}}}]$, we utilize the statistical properties of the accumulated energies. Based on Assumption 6, the accumulated energies $\emph{e}_{\emph{i}}^{\textbf{\texttt{+}}}$ and $\emph{e}_{\emph{i}}^{\textbf{\texttt{-}}}$ follow exponential distributions. The conditional error probability is determined by the ratio of their means:
\begin{equation}\label{eq55}
\begin{aligned}
\mathbb{P}\left[\text{sign}\left(\Delta_{\emph{i}}^{(\emph{n})}\right) \neq 1 \mid \emph{Z}=\emph{M}_{\emph{i}}^{\textbf{\texttt{+}}}\right] &= \mathbb{P}\left(\emph{e}_\emph{i}^\textbf{\texttt{+}} < \emph{e}_\emph{i}^\textbf{\texttt{-}}\mid \emph{Z}=\emph{M}_{\emph{i}}^{\textbf{\texttt{+}}}\right) \\
&= \frac{\mu_\emph{i}^\textbf{\texttt{-}}}{\mu_\emph{i}^\textbf{\texttt{+}} \texttt{+} \mu_\emph{i}^\textbf{\texttt{-}}} \\
&=\frac{(\emph{M}\texttt{-}\emph{M}_{\emph{i}}^{\textbf{\texttt{+}}}) \texttt{+} 1/\xi}{\emph{M} \texttt{+} 2/\xi}.
\end{aligned}
\end{equation}
where $\beta \triangleq \vartheta / \sigma_n^2$. Substituting Eq.(\ref{eq55}) and the Binomial PMF back into the total probability formula, we obtain:
\begin{equation}
\emph{P}_{\emph{i}}^{\mathrm{err}} = \sum_{\emph{M}_{\emph{i}}^{\textbf{\texttt{+}}}=0}^{\emph{M}}\frac{(\emph{M}\texttt{-}\emph{M}_{\emph{i}}^{\textbf{\texttt{+}}}) \texttt{+} 1/\xi}{\emph{M} \texttt{+} 2/\xi} \left(\begin{array}{c}
\emph{M} \\\emph{M}_{\emph{i}}^{\textbf{\texttt{+}}}\end{array}\right) \emph{p}_\emph{i}^{\emph{M}_{\emph{i}}^{\textbf{\texttt{+}}}} \emph{q}_\emph{i}^{\emph{M}\texttt{-}\emph{M}_{\emph{i}}^{\textbf{\texttt{+}}}}.
\end{equation}

To solve this issue, we exploit the linearity of the summation and split the right-hand side into two distinct terms, representing the data noise impact and channel noise impact, respectively:
\begin{equation}\label{eq57}
\begin{aligned}
\emph{P}_{\emph{i}}^{\mathrm{err}} = &\underbrace{\sum_{\emph{M}_{\emph{i}}^{\textbf{\texttt{+}}}=0}^{\emph{M}}\frac{(\emph{M}\texttt{-}\emph{M}_{\emph{i}}^{\textbf{\texttt{+}}})\xi}{\emph{M} \xi\texttt{+} 2} \left(\begin{array}{c}
\emph{M} \\\emph{M}_{\emph{i}}^{\textbf{\texttt{+}}}\end{array}\right) \emph{p}_\emph{i}^{\emph{M}_{\emph{i}}^{\textbf{\texttt{+}}}} \emph{q}_\emph{i}^{\emph{M}\texttt{-}\emph{M}_{\emph{i}}^{\textbf{\texttt{+}}}}}_{\emph{T}_1} \\+& \underbrace{\sum_{\emph{M}_{\emph{i}}^{\textbf{\texttt{+}}}=0}^{\emph{M}}\frac{1}{\emph{M} \xi\texttt{+} 2} \left(\begin{array}{c}
\emph{M} \\\emph{M}_{\emph{i}}^{\textbf{\texttt{+}}}\end{array}\right) \emph{p}_\emph{i}^{\emph{M}_{\emph{i}}^{\textbf{\texttt{+}}}} \emph{q}_\emph{i}^{\emph{M}\texttt{-}\emph{M}_{\emph{i}}^{\textbf{\texttt{+}}}}}_{\emph{T}_2}.
\end{aligned}
\end{equation}

Hence, by using Eq.(\ref{eq57}) and the properties of binomial coefficients, the first term can be obtained as:
\begin{equation}
\begin{aligned}
\emph{T}_1 &= \frac{\xi}{\emph{M} \xi\texttt{+} 2} \sum_{\emph{M}_{\emph{i}}^{\textbf{\texttt{+}}}=0}^{\emph{M}} (\emph{M}\texttt{-}\emph{M}_{\emph{i}}^{\textbf{\texttt{+}}}) \left(\begin{array}{c}
\emph{M} \\\emph{M}_{\emph{i}}^{\textbf{\texttt{+}}}\end{array}\right) \emph{p}_\emph{i}^{\emph{M}_{\emph{i}}^{\textbf{\texttt{+}}}} \emph{q}_\emph{i}^{\emph{M}\texttt{-}\emph{M}_{\emph{i}}^{\textbf{\texttt{+}}}} \\
&= \frac{\xi}{\emph{M} \xi\texttt{+} 2} \cdot \mathbb{E}[\emph{M}\texttt{-}\emph{M}_{\emph{i}}^{\textbf{\texttt{+}}}].
\end{aligned}
\end{equation}
Since $\emph{Z} \sim \mathcal{B}(\emph{M}, \emph{p}_\emph{i})$, $\mathbb{E}[\emph{M}\texttt{-}\emph{M}_{\emph{i}}^{\textbf{\texttt{+}}}] = \emph{M}(1\texttt{-}\emph{p}_\emph{i}) = \emph{M} \emph{q}_\emph{i}$. Thus, 
\begin{equation}
\emph{T}_1 = \frac{\emph{M} \emph{q}_\emph{i}\xi}{\emph{M} \xi\texttt{+} 2} = \frac{\emph{M} \emph{q}_\emph{i}}{\emph{M}\texttt{+} 2/\xi}.
\end{equation}

Similarly, we extract the constant factor to calculate the second term
\begin{equation}
\begin{aligned}
\emph{T}_2 &=  \frac{1}{\emph{M} \xi\texttt{+} 2} \sum_{\emph{M}_{\emph{i}}^{\textbf{\texttt{+}}}=0}^{\emph{M}}\left(\begin{array}{c}
\emph{M} \\\emph{M}_{\emph{i}}^{\textbf{\texttt{+}}}\end{array}\right) \emph{p}_\emph{i}^{\emph{M}_{\emph{i}}^{\textbf{\texttt{+}}}} \emph{q}_\emph{i}^{\emph{M}\texttt{-}\emph{M}_{\emph{i}}^{\textbf{\texttt{+}}}} \\
&= \frac{1}{\emph{M} \xi\texttt{+} 2} \cdot 1 = \frac{1/\xi}{\emph{M}\texttt{+} 2/\xi}.
\end{aligned}
\end{equation}

Combining $\emph{T}_1$ and $\emph{T}_2$, we can obtain
\begin{equation}
\emph{P}_{\emph{i}}^{\mathrm{err}} = \emph{T}_1 \texttt{+} \emph{T}_2 = \frac{\emph{M} \emph{q}_\emph{i}}{\emph{M}\texttt{+} 2/\xi} \texttt{+} \frac{1/\xi}{\emph{M}\texttt{+} 2/\xi}.
\end{equation}

To proceed with, we utilize the the specific bound on $\emph{q}_{\emph{i}} \leq ({\sqrt{2} \alpha_{\emph{i}}})/({3\left|\text{g}_{\emph{i}}^{(\emph{n})}\right| \sqrt{\emph{d}_{\emph{b}}}})$ shown as in Lemma 3.

The proof process relies on the properties of certain probability distributions, which is captured in Appendix \ref{app:l3}.

Under symmetry assumption and derivations in Lemma 3, we combine the upper bound on $\text{q}_{\emph{i}}$ with Eq.(\ref{eq62}) to obtain

\begin{equation}
\emph{P}_{\emph{i}}^{\mathrm{err}} \leq \frac{\emph{M}({\sqrt{2} \alpha_{\emph{i}}})/({3\left|\text{g}_{\emph{i}}^{(\emph{n})}\right| \sqrt{\emph{d}_{\emph{b}}}})}{\emph{M}\texttt{+} 2/\xi} + \frac{1/\xi}{\emph{M}\texttt{+} 2/\xi}.
\end{equation}
This completes the proof. 

\section{Proof of Lemma 3} \label{app:l3}

According to assumption in \cite{bernstein2018signsgd} and section IV-A, for a unimodal symmetric random variable
$\emph{Y}$ with mean $\varphi$ and variance $\alpha^{2}$, the following Gauss’ inequality holds:
\begin{equation}
\begin{aligned}
\mathbb{P}[|\emph{Y}-\varphi|>y] \leq\left\{\begin{array}{ll}
\frac{4}{9} \frac{\alpha^{2}}{y^{2}}, & \text { if } \frac{y}{\alpha}>\frac{2}{\sqrt{3}} \\
1-\frac{y}{\sqrt{3} \alpha}, & \text { otherwise. }
\end{array}\right.
\end{aligned}
\end{equation}

Then applying symmetry followed by Gauss’ inequality, the failure probability can be obtained by
\begin{equation}
\begin{aligned}
\mathbb{P}&\left[\text{sign}\left(\tilde{\text{g}}_{\emph{m,i}}^{(\emph{n})}\right)\neq\text{sign}\left(\text{g}_{\emph{i}}^{(\emph{n})}\right)\right]=\mathbb{P}\left[\tilde{\text{g}}_{\emph{m,i}}^{(\emph{n})}-\text{g}_{\emph{i}} \geq|\text{g}_{\emph{i}}|\right] \\
& =\frac{1}{2} \mathbb{P}\left[|\tilde{\text{g}}_{\emph{m,i}}^{(\emph{n})}-\text{g}_{\emph{i}}| \geq|\text{g}_{\emph{i}}|\right] \\
& \leq\left\{\begin{array}{ll}
\frac{2}{9} \frac{\alpha_{\emph{i}}^{2}}{\emph{d}_{\emph{b}}|\text{g}_{\emph{i}}^{(\emph{n})}|^{2}}, & \text { if } \frac{|\text{g}_{\emph{i}}^{(\emph{n})}|}{\alpha_{i} / \sqrt{\emph{d}_{\emph{b}}}}>\frac{2}{\sqrt{3}} \\
\frac{1}{2}-\frac{|\text{g}_{\emph{i}}^{(\emph{n})}|}{2 \sqrt{3} \alpha_{i} / \sqrt{\emph{d}_{\emph{b}}}}, & \text { otherwise, }
\end{array}\right.
\end{aligned}
\end{equation}
which is in all cases less than $1/2$. Eventually, we complete the proof, which is used to infer Lemma 3. 

\section{Proof of Theorem 1} \label{app:T1}

To begin with, we derive the target based on the noise introduced by data randomness according to Assumption 2. For this process, we decompose it into the data requiring analysis and the error caused by channel randomness. Thus, we obtain:
\begin{equation}
\begin{aligned}
\emph{F}(\mathbf{w}^{(\emph{n+1})}) & -\emph{F}(\mathbf{w}^{(\emph{n})}) \leq \mathbf{g}^{(\emph{n})^{\mathrm{\emph{T}}}}\left(\mathbf{w}^{(\emph{n+1})}-\mathbf{w}^{(\emph{n})}\right)\\&
+\frac{1}{2} \sum_{\emph{i}=1}^{\emph{q}} \emph{L}_{\emph{i}}\left(\emph{w}_{\emph{i}}^{(\emph{n+1})}-\emph{w}_{\emph{i}}^{(\emph{n})}\right)^{2}. 
\end{aligned}
\end{equation}
Then, we can make a substitution with $(\hat{\emph{v}}_{\emph{i}}^{(\emph{n})})^2=1$, whether it is $+1$ or $-1$. Thus, we have
\begin{equation}
\begin{aligned}
\emph{F}(\mathbf{w}^{\emph{(n+1)}}) & -\emph{F}(\mathbf{w}^{\emph{(n)}}) \leq-\eta \mathbf{g}^{\emph{(n)}^{\emph{T}}}\mathbf{\hat{v}}^{\emph{(n)}}+\frac{1}{2} \sum_{\emph{i}=1}^{\emph{q}} \emph{L}_{\emph{i}}(\eta\tilde{\emph{v}}_{\emph{i}}^{\emph{(n)}})^2 \\
= & -\eta\left\|\mathbf{g}^{\emph{(n)}}\right\|_{1}\text{sign}\left(\Delta_{\emph{i}}^{\emph{(n)}}\right)+\frac{\eta^{2}}{2}\|\mathbf{L}\|_{1}. \\&
\end{aligned}
\end{equation}

As $\text{sign}(\cdot)$ can not be determined, so the term have a randomness error. Thus, we then proceed to obtain
\begin{equation}
\begin{aligned}
\emph{F}(\mathbf{w}^{\emph{(n+1)}}) & -\emph{F}(\mathbf{w}^{\emph{(n)}}) \leq-\eta \mathbf{g}^{\emph{(n)}^{\emph{T}}}\mathbf{\hat{v}}^{\emph{(n)}}+\frac{\eta^{2}}{2}\|\mathbf{L}\|_{1} \\
= & -\eta\left\|\mathbf{g}^{\emph{(n)}}\right\|_{1}+\frac{\eta^{2}}{2}\|\mathbf{L}\|_{1} \\
& +2 \eta \sum_{\emph{i}=1}^{\emph{q}}|\text{g}_{\emph{i}}^{\emph{(n)}}| \mathbb{I}\left[\text{sign}\left(\Delta_{\emph{i}}^{\emph{(n)}}\right) \neq \text{sign}\left(\text{g}_{\emph{i}}^{\emph{(n)}}\right)\right].
\end{aligned}
\end{equation}

Thus, we can further obtain 
\begin{equation}
\begin{aligned}
\mathbb{E}\left[\emph{F}(\mathbf{w}^{\emph{(n+1)}})\right. & \left.-\emph{F}(\mathbf{w}^{\emph{(n)}}) \mid \mathbf{w}^{\emph{(n)}}\right] \leq-\eta\left\|\mathbf{g}^{\emph{(n)}}\right\|_{1}+\frac{\eta^{2}}{2}\|\mathbf{L}\|_{1} \\
+ & \underbrace{2 \eta \sum_{\emph{i}=1}^{\emph{q}}|\text{g}_{\emph{i}}^{\emph{(n)}}| \underbrace{\mathbb{P}\left[\text{sign}\left(\Delta_{\emph{i}}^{\emph{(n)}}\right) \neq \text{sign}\left(\text{g}_{\emph{i}}^{\emph{(n)}}\right)\right]}_{\triangleq \emph{P}_\emph{i}^{\text{err}}}}_{\text {Stochasticity-induced error }}.
\end{aligned}
\end{equation}

Accordingly, based on Lemma 3 and several definitions of Theorem 1, an upper bound on the stochasticity-induced error can be represented by the proof related to Appendix A as follows:
\begin{equation}
\begin{aligned}
\sum_{\emph{i}=1}^{\emph{q}}|\text{g}_{\emph{i}}^{\emph{(n)}}| \emph{P}_\emph{i}^{\text{err}} \leq
&\sum_{\emph{i}=1}^{\emph{q}} \frac{\emph{M}\cdot|\text{g}_{\emph{i}}^{\emph{(n)}}|}{\emph{M}\texttt{+} 2/\xi} \cdot \frac{\sqrt{2}|\alpha_{\emph{i}}^{\emph{(n)}}|}{3|\text{g}_{\emph{i}}^{\emph{(n)}}| \sqrt{\emph{d}_{\emph{b}}}} \\+&\sum_{\emph{i}=1}^{\emph{q}}|\text{g}_{\emph{i}}^{\emph{(n)}}| \frac{1/\xi}{\emph{M}\texttt{+} 2/\xi}.
\end{aligned}
\end{equation}

Thus, we can obtain as follow:
\begin{equation}
\begin{aligned}
\sum_{\emph{i}=1}^{\emph{q}}|\text{g}_{\emph{i}}^{\emph{(n)}}| \emph{P}_\emph{i}^{\text{err}}\leq
\frac{\emph{M}}{\emph{M}\texttt{+} 2/\xi} \cdot \frac{\sqrt{2}}{3 \sqrt{\emph{d}_{\emph{b}}}}\|\boldsymbol{\alpha}\|_{1}+\frac{1/\xi}{\emph{M}\texttt{+} 2/\xi}\|\boldsymbol{\mathbf{g}}^{\emph{(n)}}\|_{1}.
\end{aligned}
\end{equation}

Then, we perform the following operation under the Assumptions 1-4, as
\begin{equation}\label{eq51}
\begin{aligned}
\emph{F}&(\mathbf{w}^{(0)})-\emph{F}^{*} \geq \emph{F}(\mathbf{w}^{(0)})-\emph{F}(\mathbf{w}^{\emph{(N)}})
\\=&\mathbb{E}\left[\sum_{\emph{n}=0}^{\emph{N-1}} \emph{F}(\mathbf{w}^{\emph{(n)}})-\emph{F}(\mathbf{w}^{\emph{(n+1)}})\right]\\
\geq & \mathbb{E}\left[\sum_{\emph{n}=0}^{\emph{N-1}}\left(\left(\eta-2 \eta \cdot \frac{1 / \xi}{\emph{M}\texttt{+} 2/\xi}\right)\|\boldsymbol{\mathbf{g}}^{\emph{(n)}}\|_{1}-\frac{\eta^{2}}{2}\|\mathbf{L}\|_{1} \right.\right. \\&\left.\left. -\frac{2 \eta \cdot \emph{M}}{\emph{M}\texttt{+} 2/\xi} \cdot \frac{\sqrt{2}}{3 \sqrt{\emph{d}_{\emph{b}}}} \cdot \| \boldsymbol{\alpha}\|_{1}\right)\right]
\\=& \mathbb{E}\left[\sum_{\emph{n}=0}^{\emph{N-1}}\left(\frac{\emph{M} \eta \cdot \|\boldsymbol{\mathbf{g}}^{\emph{(n)}}\|_{1}}{\emph{M}\texttt{+} 2/\xi} -\frac{\eta^{2}\|\mathbf{L}\|_{1}}{2}-\frac{2\sqrt{2} \emph{M} \eta \cdot \|\boldsymbol{\alpha}\|_{1}}{3 (\emph{M}\texttt{+} 2/\xi)\sqrt{\emph{d}_{\emph{b}}}}\right)\right].
\end{aligned}
\end{equation}

In order to derive the term of required convergence rate, we rearrange Eq.(\ref{eq51}) and use the expressions for
$\emph{d}_{\emph{b}}$ and $\eta$, while conducting a series of simplifications to obtain as follow:
\begin{equation}
\begin{aligned}
\mathbb{E} & \left[\frac{1}{\emph{N}} \sum_{\emph{n}=0}^{\emph{N-1}}\|\boldsymbol{\mathbf{g}}^{\emph{(n)}}\|_{1}\right]
\leq(1\texttt{+}\frac{2}{\emph{M} \xi}) \frac{\sqrt{\gamma}}{2 \sqrt{\emph{N}}}\cdot \sqrt{\| \mathbf{L} \|_{1}}\\\texttt{+}&
(1\texttt{+}\frac{2}{\emph{M} \xi}) \cdot \frac{\sqrt{\|\mathbf{L}\|_{1}}\sqrt{\emph{N}}}{\emph{N} \sqrt{\gamma}}(\emph{F}(\mathbf{w}^{(0)})
-\emph{F}^{*}(\mathbf{w}))\texttt{+}\frac{2 \sqrt{2} \sqrt{\gamma} \|\boldsymbol{\alpha}\|_{1}}{3 \sqrt{\emph{N}}}).
\end{aligned}
\end{equation}

This completes the proof. 

\end{appendices}

\bibliographystyle{IEEEtran} 
\bibliography{bib}

\begin{thebibliography}{10}
\providecommand{\url}[1]{#1}
\csname url@samestyle\endcsname
\providecommand{\newblock}{\relax}
\providecommand{\bibinfo}[2]{#2}
\providecommand{\BIBentrySTDinterwordspacing}{\spaceskip=0pt\relax}
\providecommand{\BIBentryALTinterwordstretchfactor}{4}
\providecommand{\BIBentryALTinterwordspacing}{\spaceskip=\fontdimen2\font plus
\BIBentryALTinterwordstretchfactor\fontdimen3\font minus \fontdimen4\font\relax}
\providecommand{\BIBforeignlanguage}[2]{{%
\expandafter\ifx\csname l@#1\endcsname\relax
\typeout{** WARNING: IEEEtran.bst: No hyphenation pattern has been}%
\typeout{** loaded for the language `#1'. Using the pattern for}%
\typeout{** the default language instead.}%
\else
\language=\csname l@#1\endcsname
\fi
#2}}
\providecommand{\BIBdecl}{\relax}
\BIBdecl

\bibitem{11271330}
X.~Jia, D.~Zhou, M.~Sheng, and et.al, ``Satellite computing network construction: Optimal computing node deployment in multi-layer leo mega-constellations,'' \emph{IEEE Trans. Commun.}, pp. 1--1, Nov. 2025.

\bibitem{9210567}
O.~Kodheli, E.~Lagunas, N.~Maturo, and et.al, ``Satellite communications in the new space era: A survey and future challenges,'' \emph{IEEE Commun. Surv. Tutorials}, vol.~23, no.~1, pp. 70--109, Oct. 2021.

\bibitem{11003120}
K.~Mashiko, H.~Hashida, Y.~Kawamoto, and et.al, ``Spectral-efficient dynamic user association in {LEO} satellite constellation-{HAPS} cooperative networks,'' \emph{IEEE Trans. Cognit. Commun. Networking}, vol.~12, pp. 796--804, May 2026.

\bibitem{10935306}
Z.~Wang, G.~Sun, and et.al, ``Cluster-based multi-agent task scheduling for space–air–ground integrated networks,'' \emph{IEEE Trans. Cognit. Commun. Networking}, vol.~12, pp. 29--42, Mar. 2026.

\bibitem{11293041}
Y.~Tao, B.~Lei, H.~Shi, and et.al, ``Adaptive multi-layer deployment for a digital-twin-empowered satellite-terrestrial integrated network,'' \emph{Front. Inf. Technol. Electron. Eng.}, vol.~26, no.~2, pp. 246--259, 2025.

\bibitem{11112763}
F.~Zhang, D.~Wang, J.~Kang, and et.al, ``Ground-assisted {LEO} satellite {F}ederated learning: Dynamic, efficient, distributed learning,'' \emph{IEEE Trans. Mob. Comput.}, vol.~24, no.~12, pp. 13\,383--13\,396, Aug. 2025.

\bibitem{11036334}
L.~Cheng, G.~Feng, S.~Qin, Y.~Sun, J.~Wang, F.~Wang, and T.~Q.~S. Quek, ``Scheduling model transmission and training for contact-based decentralized satellite {F}ederated learning,'' \emph{IEEE Trans. Cognit. Commun. Networking}, vol.~12, pp. 1211--1226, June 2026.

\bibitem{11263807}
C.~Zhang, B.~K. Ng, and et.al, ``Unlocking potential in {LEO} satellites communications through spatial modulation and space shift keying,'' \emph{IEEE Open J. Commun. Soc.}, vol.~6, pp. 9862--9878, Nov. 2025.

\bibitem{10353003}
Y.~Ding, Z.~Yang, Q.-V. Pham, and et.al, ``Distributed machine learning for {UAV} swarms: Computing, sensing, and semantics,'' \emph{IEEE Internet Things J.}, vol.~11, no.~5, pp. 7447--7473, Mar. 2024.

\bibitem{10608136}
S.~Zhang, R.~Chai, C.~Liang, and Q.~Chen, ``Dynamic resource allocation for multibeam satellite communication systems,'' \emph{IEEE Internet Things J.}, vol.~11, no.~22, pp. 36\,907--36\,921, July 2024.

\bibitem{10646360}
Z.~M. Bakhsh, Y.~Omid, G.~Chen, and et.al, ``Multi-satellite {MIMO} systems for direct satellite-to-device communications: {A} survey,'' \emph{IEEE Commun. Surv. Tutorials}, vol.~27, no.~3, pp. 1536--1564, Aug. 2025.

\bibitem{11132321}
M.~Wang, A.~Nardin, R.~Ma, and et.al, ``Integrated communication and navigation based on {LEO} satellite networks: {A} survey,'' \emph{IEEE Internet Things J.}, vol.~12, no.~22, pp. 46\,244--46\,268, Nov. 2025.

\bibitem{10418548}
S.~R. Pokhrel and J.~Choi, ``Data-driven satellite communication and control for future {IoT}: Principles and opportunities,'' \emph{IEEE Trans. Aerosp. Electron. Syst.}, vol.~60, no.~3, pp. 3307--3318, June 2024.

\bibitem{10734153}
Z.~Zhao, K.~Xu, W.~Hong, M.~Peng, Z.~Ding, T.~Q.~S. Quek, and H.~H. Yang, ``Model pruning for distributed learning over the {Air},'' \emph{IEEE Trans. Signal Process.}, vol.~72, pp. 5533--5549, Oct. 2024.

\bibitem{10320326}
J.-B. Kim, I.-H. Lee, and H.~Jung, ``{LEO} satellite-aided over-the-air computation for unmanned aerial vehicle swarm sensing,'' \emph{IEEE Commun. Lett.}, vol.~28, no.~1, pp. 143--147, Jan. 2024.

\bibitem{10746330}
Y.~Huang, X.~Li, M.~Zhao, H.~Li, and M.~Peng, ``Asynchronous federated learning via over-the-air computation in {LEO} satellite networks,'' \emph{IEEE Trans. Wireless Commun.}, vol.~23, no.~12, pp. 19\,885--19\,901, Dec. 2024.

\bibitem{6557530}
M.~Goldenbaum, H.~Boche, and S.~Stańczak, ``Harnessing interference for analog function computation in wireless sensor networks,'' \emph{IEEE Trans. Signal Process.}, vol.~61, no.~20, pp. 4893--4906, July 2013.

\bibitem{9095231}
W.~Liu, X.~Zang, Y.~Li, and B.~Vucetic, ``Over-the-air computation systems: Optimization, analysis and scaling laws,'' \emph{IEEE Trans. Wireless Commun.}, vol.~19, no.~8, pp. 5488--5502, May 2020.

\bibitem{10315036}
N.~G. Evgenidis, V.~K. Papanikolaou, P.~D. Diamantoulakis, and G.~K. Karagiannidis, ``Over-the-air computing with imperfect {CSI}: Design and performance optimization,'' \emph{IEEE Trans. Wireless Commun.}, vol.~23, no.~6, pp. 6093--6108, Nov. 2024.

\bibitem{9852737}
H.~Al-Hraishawi, H.~Chougrani, S.~Kisseleff, and et.al, ``{A} survey on nongeostationary satellite systems: The communication perspective,'' \emph{IEEE Commun. Surv. Tutorials}, vol.~25, no.~1, pp. 101--132, Aug. 2023.

\bibitem{11260860}
J.-S. Huang, K.-H. Tu, J.-Y. Luo, M.-C. Hsu, and C.-Y. Li, ``{6G}-dtauth: Distributed token-based authentication for resilient {6G} networks with {LEO} satellites,'' in \emph{Proc. IEEE Conf. Dependable Secur. Comput., (DSC), Virtual, Tokyo, Japan}, Oct. 2025, pp. 1--9.

\bibitem{10302307}
J.~Wang and S.~Guo, ``Joint pre-equalization and receiver combining design for {F}ederated learning with misaligned over-the-air computation,'' \emph{IEEE Open J. Commun. Soc.}, vol.~4, pp. 2881--2896, Oct. 2023.

\bibitem{9272666}
G.~Zhu, Y.~Du, D.~Gündüz, and K.~Huang, ``One-bit over-the-air aggregation for communication-efficient federated edge learning: Design and convergence analysis,'' \emph{IEEE Trans. Wireless Commun.}, vol.~20, no.~3, pp. 2120--2135, Mar. 2021.

\bibitem{9771881}
A.~Şahin and et.al, ``Over-the-air computation with {DFT}-spread {OFDM} for federated edge learning,'' in \emph{IEEE Wireless Commun. Networking Conf. (WCNC), Austin, TX, United states}, Apr. 2022, pp. 1886--1891.

\bibitem{10008587}
A.~Şahin and R.~Yang, ``Over-the-air computation over balanced numerals,'' in \emph{IEEE Glob. Commun. Workshops. (GC-Wkshps), Rio de Janeiro, Brazil}, Dec. 2022, pp. 347--352.

\bibitem{bernstein2018signsgd}
J.~Bernstein, Y.-X. Wang, and et.al, ``{s}ign{SGD}: Compressed optimisation for {N}on-{C}onvex problems,'' in \emph{Proc. Int. Conf. Mach. Learn. (ICML), Virtual Only}, vol.~80, July 2018, pp. 560--569.

\bibitem{bernstein2019signsgd}
J.~Bernstein, J.~Zhao, and et.al, ``Sign{SGD} with {M}ajority {V}ote is communication efficient and fault tolerant,'' in \emph{Proc. Int. Conf. Learn. Represent. (ICLR), New Orleans, LA, USA,}, May 2019, pp. 1--1.

\bibitem{9844173}
G.~Shi, S.~Guo, J.~Ye, N.~Saeed, and S.~Dang, ``Multiple parallel federated learning via over-the-air computation,'' \emph{IEEE open J. Commun. Soc.}, vol.~3, pp. 1252--1264, July 2022.

\bibitem{10078151}
A.~Şahin, ``Distributed learning over a wireless network with non-coherent majority vote computation,'' \emph{IEEE Trans. Wireless Commun.}, vol.~22, no.~11, pp. 8020--8034, Mar. 2023.

\bibitem{jin2021stochasticsign}
R.~Jin, Y.~Huang, and et.al, ``Stochastic-sign {SGD} for federated learning with theoretical guarantees,'' \emph{arXiv.cs.LG.2002.10940}, Feb. 2021.

\bibitem{allenzhu2018natasha}
Z.~Allen-Zhu, ``Natasha 2: faster non-convex optimization than {SGD},'' in \emph{Proc. Adv. Neural Inf. Process. Syst. (NeurIPS), Montr\'{e}al, Canada}, Dec. 2018, p. 2680–2691.

\bibitem{9641940}
A.~Şahin, B.~Everette, and S.~S. Muhtasimul~Hoque, ``Distributed learning over a wireless network with fsk-based majority vote,'' in \emph{Proc. Int. Conf. Adv. Commun. Technol. Netw.(CommNet), Virtual, Online, Morocco}, Dec. 2021, pp. 1--9.

\end{thebibliography}
\end{document}